%
%

%
\RequirePackage{fix-cm}
\documentclass{svjour3}                     
\smartqed  

\usepackage{fullpage,amsmath,amssymb, amsfonts, url,graphicx,latexsym}
\usepackage[nothing]{algorithm}
\usepackage{algorithmicx}
\usepackage{setspace}
\usepackage[noend]{algpseudocode}

\floatname{algorithm}{Pseudocode}

\usepackage{xcolor}


\newcommand{\lastcorrections}%
{{
 \begin{sloppypar}
    \baselineskip -0.2in
    \tiny\bf\noindent
last corrections:\\
\end{sloppypar}
}}



\newcommand{\myparagraph}[1]{{\smallskip\noindent{\bf #1}}}
\newcommand{\emparagraph}[1]{{\smallskip\noindent\emph{#1}}}



\newcommand{\barlambda}{{\bar \lambda}}

\newcommand{\calJ}{{\cal J}}

\newcommand{\starf}{f^\ast}
\newcommand{\starh}{h^\ast}
\newcommand{\starH}{H^\ast} 
\newcommand{\starlambda}{\lambda^\ast}

\newcommand{\malymax}{{\mbox{\tiny\rm max}}}

\newcommand{\circh}{h^\circ}
\newcommand{\circH}{H^\circ}



\newcommand{\braced}[1]{{ \left\{ #1 \right\} }} 
\newcommand{\smallbraced}[1]{{\{ #1 \}}}

\newcommand{\brackd}[1]{{ \left[ #1 \right] }}

\newcommand{\floor}[1]{{ \lfloor #1 \rfloor }}
\newcommand{\ceiling}[1]{{ \lceil #1 \rceil }}



\newcommand{\suchthat}{{\;:\;}}
\newcommand{\assign}{\,{\leftarrow}\,}

\newcommand{\MGaps}{\textsf{D}}

\newcommand{\TotalFlow}{\textsf{F}}
\newcommand{\Thput}{\textsf{T}}
\newcommand{\rmin}{r_{\textrm{min}}}
 
\newcommand{\total}{{\Sigma}}

\newcommand{\MaxGaps}{{\mbox{\sc MaxGaps}}}
\newcommand{\MaxThrpt}{{\mbox{\sc MaxThrpt}}}
\newcommand{\AlgViable}{{\mbox{\sc Viable}}}
\newcommand{\AlgMinMaxGap}{{\mbox{\sc MinMaxGap}}}
\newcommand{\AlgMinTotFlow}{{\mbox{\sc MinTotFlow}}}
\newcommand{\AlgMinGapMaxFlow}{{\mbox{\sc MinGapMaxFlow}}}
\newcommand{\AlgAdjust}{{\mbox{\sc Adjust}}}  
\newcommand{\AlgMinMaxflowGap}{{\mbox{\sc MinMaxflowGap}}}  
\newcommand{\AlgMatrixSelect}{{\mbox{\sc MatrixSelect}}}

\newcommand{\TRUE}{{\mbox{\textsf{true}}}}
\newcommand{\FALSE}{{\mbox{\textsf{false}}}}

\newcommand{\myIf}{\mbox{\textbf{if}}}
\newcommand{\myThen}{\mbox{\textbf{then}}}
\newcommand{\myReturn}{\mbox{\textbf{return}}}



 

\newenvironment{bigeqn*}{\large\begin{eqnarray*}}{\end{eqnarray*}}



\begin{document}

\title{Scheduling with Gaps: New Models and Algorithms\thanks{M.Chrobak was
supported by NSF grants CCF-0729071, CCF-1217314 and CCF-1536026.
M.~Golin was supported by grant FSGRF14EG28. }} 

\author{Marek Chrobak \and  Mordecai Golin \and  Tak-Wah Lam \and   Dorian Nogneng}

\institute{M.~Chrobak \at   
			Department of Computer Science,
         	University of California at Riverside, USA.
			\and
			M.~Golin \at 
			Department of Computer Science and Engineering,
			Hong Kong University of Science and Technology,
			Hong Kong.     
			\and
			T-W.~Lam \at   
			Department of Computer Science,
				University of Hong Kong,
				Hong Kong.
			\and
			D.~Nogneng \at
			LIX, {\'E}cole Polytechnique,
	 		France.
			}

\date{}  

\maketitle

\begin{abstract}
We consider scheduling problems for unit jobs with release times, where the number or size of the gaps
in the schedule is taken into consideration, either in the objective function or as a constraint.
Except for several papers on minimum-energy scheduling, there is no work in the scheduling literature that
uses performance metrics depending on the gap structure of a schedule.
One of our objectives is to initiate the study of such scheduling problems.

We focus on the model with unit-length jobs. First we examine scheduling problems with deadlines, where
we consider two variants of minimum-gap scheduling: maximizing throughput with a budget for
the number of gaps and minimizing the number of gaps with a throughput requirement. 
We then turn to other objective functions. For example, in some scenarios
gaps in a schedule may be actually desirable, leading to the  problem of maximizing the number of gaps. 
Other versions we study include minimizing maximum gap or maximizing minimum gap. 
The second part of the paper examines the model without deadlines, where we focus on 
the tradeoff between the number of gaps and the total or maximum flow time.

For all these problems we provide polynomial time algorithms, with running times ranging from
$O(n\log n)$ for some problems to $O(n^7)$ for other. The solutions involve a spectrum
of algorithmic techniques, including different dynamic programming formulations, speed-up techniques based
on searching Monge arrays, searching $X+Y$ matrices, or implicit binary search.

Throughout the paper, we also draw a connection between gap scheduling problems and their continuous analogues, namely
hitting set problems for intervals of real numbers. As it turns out, for some problems, the continuous
variants provide insights leading to more efficient algorithms for the corresponding
discrete versions, while for other problems completely new techniques are needed to solve the discrete version. 
\end{abstract} 


\section{Introduction}
\label{sec: introduction}
%
%



We consider scheduling of unit-length jobs with release times,
where the number or size of the gaps in the schedule is taken into
consideration, either in the objective function or as a constraint.  

This research was inspired by the work on scheduling problems where the objective is to minimize
the number of gaps in a schedule. Such problems arise in minimum-energy scheduling in the power-down
model, where a schedule specifies not only execution times of jobs but also
at what times the processor can be turned off. The processor uses energy at
rate $L$ per time unit when the power is on, and it does not consume any energy when 
it is off. If the energy required to power-up the system is less than $L$ 
then energy minimization is equivalent to minimizing the number of gaps in the schedule. 
The problem was introduced in 2005 by
Irani and Pruhs~\cite{Irani_Pruhs_algorithmic_05}, and its complexity remained open for a few years.
The first progress was achieved by Baptiste~\cite{Baptiste:min-idle-periods}, who gave 
a polynomial time algorithm for unit jobs that achieves running time $O(n^7)$. 
This time complexity was subsequently reduced to $O(n^4)$ in~\cite{Baptiste-etal-07,Baptiste_etal_polynomial_12}.
(In that paper a generalization to arbitrary processing times with job preemption is also considered.)
A greedy algorithm was analyzed in~\cite{CFHKLN_greedy_gaps_13,CFHKLN_greedy_gaps_17} and shown to have
approximation ratio $2$. Other variants of this problem have
been studied, for example the multiprocessor case~\cite{DeGHRZ07} or
the case when jobs have agreeable deadlines~\cite{Angel_etal_agreeable_energy_12,Angel_etal_agreeable_energy_14}.  
(See the survey in~\cite{Bampis_algorithmic_survey_16} for more information.)

To our knowledge, the above gap-minimization model is the only scheduling model
in the literature that considers gaps in the schedule as a  performance measure.
As we show, however, one can formulate a number of other natural, but not yet studied
variants of gap scheduling problems. Some of these problems can be
solved using dynamic-programming techniques resembling those used
for minimizing the number of gaps. Other require new approaches,
giving rise to new and interesting algorithmic problems.

Throughout the paper, we focus exclusively on the model with unit-length jobs.
The first type of scheduling problems we study involve jobs with release times and
deadlines. In this category, we address the following problems:
\begin{itemize}
	\item 
	In Section~\ref{sec: maximizing throughput with budget for gaps}, we study
	maximizing throughput (the number or total weight of scheduled jobs) with a budget $\gamma$ for
	the number of gaps. We give an $O(\gamma n^6)$-time algorithm for this problem.
	\item 
	In Section~\ref{sec: min gaps with throughput requirement} we study the
	variant where we need to minimize the number of gaps under 
	a throughput requirement, namely where either the number of jobs or
	their total weight must meet a specified threshold.
	We show that this problem can be solved in time $O(n^7)$.
	\item 
	In the two problems above, the underlying assumption was that
	it is desirable to have as few gaps as possible. However, in certain applications
	gaps in a schedule may be actually desirable. This motivates the gap
	scheduling model where we wish to \emph{maximize} the number of gaps
	while scheduling all jobs (providing that the instance is feasible).
	We study this problem in Section~\ref{sec: maximizing the number of gaps}, and
	we provide an algorithm that computes an optimal schedule in time $O(n^5)$.
	\item 
	Instead of the total number of gaps, the \emph{size} of gaps may be a
	useful attribute of a schedule. In Section~\ref{sec: minimizing max gap}
	we study the problem where, assuming that the given instance is
	feasible, we want to compute a schedule for which the maximum gap size
	is minimized. We give an $O(n^2\log n)$-time algorithm for this problem.
\end{itemize}
We also consider scheduling problems where jobs have no deadlines. Now all jobs need
to be scheduled. In this model we can of course schedule all jobs in one block, without gaps,
but then some jobs may need to wait a long time for execution. To avoid this,
we will also take into account the flow time measure, where the flow of a job is the time elapsed between
its release and completion times, and we will attempt to minimize either the
maximum flow or the total flow of jobs. We address three problems in this category:
\begin{itemize}
	\item Minimizing total flow time with a budget $\gamma$ for the number of gaps
	(Section~\ref{sec: minimizing total flow time budget for gaps}).
	As we show, this problem can be solved in time $O(n\log n  + \gamma n)$, by exploiting
	the Monge property of the dynamic programming arrays. The running time is in fact
	$O(\gamma n)$ if the jobs are given in sorted order of release times.
	\item Minimizing the number of gaps with a budget for total flow 
	(Section~\ref{sec: minimizing gaps with bound on total flow}). The algorithm
	from Section~\ref{sec: minimizing total flow time budget for gaps} can be adapted to solve
	this problem in time  $O(n\log n + g^\ast n)$, where
	$g^\ast$ is the optimum value. If the jobs are given in sorted order of release times,
	the running time is $O(g^\ast n)$.
	\item Minimizing the number of gaps with a bound on the maximum flow time
	 (Section~\ref{sec: minimizing gaps with bound on max flow}).
	We show that this problem can be solved in
	time $O(n\log n)$, or even $O(n)$ if the jobs are already sorted in order of increasing
	release times.
	\item Minimizing maximum flow time with a budget $\gamma$ for the number of gaps
	(Section~\ref{sec: min max flow bound on gaps}). 
	For this problem we give an	algorithm with running time $O(n\log n)$.
\end{itemize}

Summarizing, for all these problems we provide polynomial-time algorithms, with running times ranging from
$O(n\log n)$ for some problems, to $O(n^7)$ for other. Interestingly, the solutions involve a wide spectrum
of algorithmic techniques, including different dynamic programming formulations and
speed-up techniques based on searching Monge arrays, searching $X+Y$ matrices, and implicit binary search.

As another theme throughout the paper, we draw a connection between gap
scheduling problems that we study and their continuous analogues, which are variants of
hitting set problems for intervals of real numbers. In this continuous model,
each job is represented by an interval between its release time and deadline, and a ``schedule'' 
assigns it to a point in this interval. 
For example, the continuous version of the minimum-gap scheduling problem is equivalent to
computing a hitting set of minimum cardinality. As it turns out, for some problems, the continuous
variants provide insights leading to more efficient algorithms for the corresponding
discrete versions, while in other problems completely new techniques are needed to solve
the discrete version.


\section{Preliminaries}
\label{sec: preliminaries}
%
%



The time is assumed to be discrete, divided into unit time intervals $[t,t+1)$, for $t = 1,2,...$,
that we call \emph{slots}. We will number these consecutive slots $0,1,...$, and we
will refer to $[t,t+1)$ simply as \emph{time slot $t$}, or occasionally even as \emph{time $t$}.
By $\calJ$ we will denote the instance, consisting of a set of unit-length jobs numbered
$1,2,...,n$, each job $j$ with a given integer release time $r_j$. This
$r_j$ denotes the first slot where $j$ can be executed.

A \emph{schedule} $S$ of $\calJ$ is defined by an assignment of jobs to time slots such 
that (i) if a job $j$ is assigned to a slot $t$ then $t\ge r_j$, and
(ii) no two jobs are assigned to the same slot. If $j$ is assigned to slot $t$
in a schedule $S$ then we say that it is \emph{scheduled} or \emph{executed} at $t$.
In most scheduling problems we assume that all jobs can be scheduled.
In problems that involve throughput we will also consider partial schedules, where only
a subset of the jobs is scheduled (for jobs outside this subset the schedule is undefined).

For a given schedule $S$, time slots where jobs are scheduled are called
\emph{busy}, while all other slots are called \emph{idle}. An inclusion-wise maximal time interval of busy slots is called
a \emph{block} of $S$. An interval between two consecutive blocks in $S$ is called a \emph{gap} of $S$.
Of course, the number of blocks in $S$ is always one more than the number of gaps.


\paragraph{Instances with deadlines.}
In some of the scheduling problems we consider the jobs in $\calJ$ will also have specified deadlines.
The \emph{deadline} of job $j$ is denoted $d_j$, is assumed to be integer, and it is the 
last slot where $j$ can be scheduled. (Thus it may happen that $d_j = r_j$, in which case
$j$ can only be executed in one slot.)

For instances with deadlines, we can restrict our attention to schedules $S$ that satisfy the
\emph{earliest-deadline-first property (EDF)}: at any time $t$, either
$S$ is idle at $t$ or it schedules a pending job with the earliest
deadline. Using the standard exchange argument, any schedule can be
converted into one that satisfies the EDF property and has the same set of busy slots.

Without loss of generality, we can make the following assumptions about $\calJ$:
\begin{description}
	\item{(i)} $r_j \le d_j$ for each $j$, 
	\item{(ii)} all jobs are ordered according to deadlines, that is $d_{1}\le \ldots\leq d_{n}$, 
	\item{(iii)} all release times are distinct and all deadlines are  distinct, and
	\item{(iv)} $\calJ$ is feasible (that is, all jobs can be scheduled).   
\end{description}
The validity of assumptions~(i) and~(ii) is trivial. Assumption~(iv) follows
immediately from~(iii), as we can simply schedule each job at its release time.
Therefore we only need to justify~(iii).

To show that assumption~(iii) is valid, we modify the original instance as follows:
If two release times are equal, say when $r_i = r_j$ and $d_j \le d_i$ for $i\neq j$,  then we let $r_i = r_i + 1$. 
Symmetrically, if $d_i = d_j$ and $r_i \le r_j$ then we let $d_i = d_i-1$. 
If this change produces a job $i$ with $d_i < r_i$, then job $i$ cannot of course be
scheduled. For problems where the feasibility is a requirement, we can then
report that the instance is not feasible. For other problems,
we can remove this job $i$ from the instance altogether. 
The correctness of assumption~(iii) can be justified using a standard exchange argument
that we formalize by proving the lemma below.
(Schedules considered in this lemma are allowed to be partial.) 


\begin{lemma}\label{lem: assumption (iii) is valid}
Let $\calJ'$ be the instance obtained by modifying a given instance $\calJ$
as explained above, and let $X$ be some set of time slots. Then
$\calJ$ has a schedule $S$ whose set of busy slots is $X$ 
if and only if $\calJ'$ has a schedule $S'$ whose set of busy slots is $X$. 
\end{lemma}

\begin{proof}
We now justify Lemma~\ref{lem: assumption (iii) is valid}. It is sufficient
to consider only the case when $\calJ'$ is obtained from $\calJ$ by modifying just
one job, as then we can apply the lemma repeatedly.  So suppose that we have two different jobs $i,j$ in $\calJ$ with 
$r_i = r_j$ and $d_j \le d_i$, and that $\calJ'$ is obtained
from $\calJ$ by replacing $i$ by $i'$ such that  $r_{i'} = r_i + 1$ and $d_{i'} = d_i$.

$(\Leftarrow)$ This implication is trivial, because any schedule $S'$ of
$\calJ'$ gives us a schedule $S$ of $\calJ$ with the same set of busy slots by
simply replacing $i'$ by $i$ (if $i'$ is used at all).

$(\Rightarrow)$ Consider a schedule $S$ of some subset of $\calJ$
in which $X$ is the set of busy slots. If $i$ is not scheduled in $S$ then we can simply use $S' = S$.
If $i$ is scheduled in $S$ at slot other than $r_i$, then we can obtain $S'$
by replacing $i$ by $i'$. The last case is when $i$ is scheduled at a slot $r_i$ in $S$.
If $j$ is scheduled in $S$ as well then we obtain $S'$ by swapping $i$ and $j$ in $S$
and then replacing $i$ by $i'$, with $i'$ scheduled where $j$ was scheduled in $S$.
On the other hand, if $j$ is not scheduled in $S$, then we obtain $S'$ by replacing $i$ by $j$ which is scheduled at $r_j=r_i$.
\qed  
\end{proof}

To implement the modification of the instance outlined before Lemma~\ref{lem: assumption (iii) is valid}, 
when we adjust the release times we can process them
in increasing order to facilitate finding equal release times. Each job's release time can be
incremented at most $n$ times, and maintaining the ordering will introduce a logarithmic overhead.
Deadlines can be processed in the symmetric way.
Then the overall running time to modify the instance will be $O(n^2\log n)$. 
Thus this preprocessing does not affect the  overall running time of our algorithms for instances with
deadlines (that all have running time at least this large).


\paragraph{Instances without deadlines.}
For schedules involving objective functions other than throughput, we can
assume that the jobs are ordered according to non-decreasing release times.
For the total-flow objective function we can also assume that all release times
are different. The reason is that, although modifying the release times may change the total flow value 
(see the definition of the flow time in Section~\ref{sec: minimizing total flow time budget for gaps}, paragraph $1$), 
this change will be uniform for all schedules, so the schedule's optimality will not be affected. 
The appropriate modification of release times can be achieved in time $O(n\log n)$ as follows:
First, sort all jobs in order of release times, so that $r_1\le r_2 \le ... \le r_n$.
Process them in this order. Providing that the new release times $r'_1 < r'_2 < ... < r'_{j-1}$
of jobs $1,2,...,j-1$ are already computed, let the new release time of job $j$
be $r'_j = \max(r'_{j-1}+1,r_j)$. If the jobs are already given in the sorted order, this
process will in fact take time $O(n)$. Thus the running times of our algorithms are not affected by
this preprocessing. 

We remark that modifying release times may affect the maximum flow values non-uniformly (that is, differently for different schedules), 
so we will not be using the
assumption about different release times in Sections~\ref{sec: minimizing gaps with bound on max flow}
and~\ref{sec: min max flow bound on gaps}, where maximum flow of jobs is considered.


\paragraph{Shifting blocks.}
To improve the running time,
some of our algorithms use assumptions about possible locations of the blocks in an optimal schedule.
The general idea is that each block can be shifted, without affecting the objective function,
to a location where it will contain either a deadline or a release time. The following 
lemma (that is implicit in~\cite{Baptiste-etal-07}) is useful for this purpose.
We formulate the lemma for leftward shifts; an analogous lemma can be formulated for
rightward shifts and for deadlines instead of release times.


\begin{lemma}\label{lem: shift}
Assume that all jobs in the instance have different release times.
Let $B = [u,v]$ be a block in a schedule such that the job scheduled at $v$ has release time
strictly before $v$.
Then $B$ can be shifted leftward by one slot, in the sense that the jobs in $B$ can be scheduled
in the interval $[u-1,v-1]$.
\end{lemma}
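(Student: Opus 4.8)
The plan is to argue by a direct counting/exchange argument that shows the interval $[u-1,v-1]$ can accommodate all jobs of $B$ while respecting release times. The only thing that can go wrong when shifting $B$ left by one slot is that some job in $B$ has release time $v$ exactly at the right end (so it cannot move to $v-1$), or more generally that the jobs of $B$, once shifted, have release times that force some of them into slots they cannot occupy. So the first step is to set up notation: let $B=[u,v]$, let $k = v-u+1$ be the number of slots (hence jobs) in $B$, and let $j_u, j_{u+1},\dots,j_v$ be the jobs scheduled in these slots in the original schedule. I would like to reassign these $k$ jobs to the $k$ slots $u-1,u,\dots,v-1$.

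The key step is to observe that, since all release times are distinct and each job $j_t$ (for $u\le t\le v$) satisfies $r_{j_t}\le t$, and moreover by hypothesis $r_{j_v} \le v-1$, the multiset of release times $\{r_{j_t} : u\le t\le v\}$ consists of $k$ distinct integers, each at most its original slot index, with the largest one at most $v-1$. I would then invoke Hall's theorem (or simply a greedy earliest-release assignment) for the bipartite graph matching jobs to the shifted slots $\{u-1,\dots,v-1\}$, where $j_t$ may go to any slot $s$ with $r_{j_t}\le s \le v-1$. To verify Hall's condition, for any set of jobs one considers the one with largest release time $\rho$ in that set; since release times are distinct, a set of $m$ jobs from $B$ contains jobs whose release times are $m$ distinct values, the largest being some $\rho\le v-1$, and by the EDP-like structure of the original schedule the number of available shifted slots $\{s : \rho \le s \le v-1\}$ is $v-1-\rho+1 = v-\rho \ge$ (number of original slots $\ge \rho$ occupied by these jobs) $\ge m$. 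Actually the cleanest route is: process slots $u-1, u, \dots, v-1$ left to right, each time assigning the not-yet-placed job of $B$ with the smallest release time; one shows inductively that at slot $s$ every remaining job has release time $\le s$, using that in the original schedule the $k$ jobs had release times bounded by $u,u+1,\dots,v-1$ (after sorting, the $i$-th smallest release time is $\le u-1+i$ because at most $i-1$ jobs of $B$ precede slot $u-1+i$... ), which is exactly the condition needed for the shifted assignment to be feasible.

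Concretely, the cleanest invariant: sort the jobs of $B$ so that $r_{j_{(1)}} < r_{j_{(2)}} < \dots < r_{j_{(k)}}$. In the original schedule all $k$ jobs fit in $[u,v]$, so for each $i$ we have $r_{j_{(i)}} \le u+i-1$ (the $i$ smallest-release jobs must occupy $i$ slots, none before their release, forcing the $i$-th to be at slot $\ge u+i-1$ only if... — more carefully, since the original block occupies exactly slots $u,\dots,v$ and job $j_{(i)}$ together with the $i-1$ jobs of smaller release time occupy $i$ distinct slots all $\ge u$, but we need the reverse bound). The bound we actually need, $r_{j_{(i)}} \le u+i-2$ for all $i$, follows from the hypothesis applied not just to $v$ but propagated: the job at $v$ has $r \le v-1$, so after removing it the job at $v-1$ in the EDP schedule has $r\le v-1$ too, and descending this way gives $r_{j_{(i)}} \le u+i-2$. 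Then assigning $j_{(i)}$ to slot $u-2+i$ for $i=1,\dots,k$ places all jobs in $[u-1,v-1]$ with every release constraint satisfied, which is the claim.

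The main obstacle I anticipate is establishing the bound $r_{j_{(i)}}\le u+i-2$ for every $i$ (not merely $i=k$): the hypothesis only directly gives information about the job at slot $v$. Resolving this requires using that we may assume the schedule satisfies the earliest-deadline property (or, in deadline-free settings, can be rearranged within the block), so that processing the block right-to-left, each job's release time is forced down by one because otherwise it could have been swapped past a later job — i.e. a "cascade" argument. Once that cascade is in place, the rest is the routine index-shifting verification above.
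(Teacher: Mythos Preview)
Your sorted-assignment idea is sound and genuinely different from the paper's argument, but the step you flag as the ``main obstacle'' is where your write-up actually breaks: you assert that proving $r_{j_{(i)}} \le u+i-2$ ``requires'' the earliest-deadline property or a cascade, yet the lemma is stated for an arbitrary schedule, with no EDP assumption available. In fact the inequality follows by pure counting. Suppose $r_{j_{(i)}} \ge u+i-1$ for some $i$. Then the $k-i+1$ jobs $j_{(i)},\ldots,j_{(k)}$ all have release time $\ge u+i-1$ and hence occupy slots of $B$ in $[u+i-1,v]$; since that interval has exactly $k-i+1$ slots, they fill it, and their $k-i+1$ distinct release times are $k-i+1$ distinct integers in $[u+i-1,v]$, forcing one of them to equal $v$. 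That job must then sit at slot $v$, contradicting the hypothesis that the job at $v$ has release time strictly before $v$. With this bound in hand, placing $j_{(i)}$ at slot $u+i-2$ finishes the proof.

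The paper's proof is quite different and more constructive: it builds a chain $i_1,i_2,\ldots,i_q$ where $i_1$ is the job at slot $v$ and each $i_{b+1}$ is the job currently scheduled at slot $r_{i_b}$, stopping once $r_{i_q}<u$; then it moves $i_q$ to $u-1$ and each other $i_b$ to $r_{i_b}$, leaving all remaining jobs of $B$ untouched. This buys something your assignment does not: every job that moves goes to a strictly earlier slot, so deadlines (when present) are automatically respected. Your sorted assignment can push a job later inside $[u-1,v-1]$; for instance, with block $[5,8]$ and release times $3,1,2,7$ at slots $5,6,7,8$ respectively, the job originally at slot $5$ lands at slot $6$. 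That is harmless in the deadline-free settings where the lemma is actually invoked, but the paper's chain argument is the more robust of the two.
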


\begin{proof}
We construct a sequence of job indices $i_1,i_2,...,i_q$ such that $i_1$ is the job scheduled at $v$, 
each job $i_b$, for $b = 2,3,...,q$, is scheduled in $B$ at the release time $r_{i_{b-1}}$ of the previous job
in the sequence, and $r_{i_{q}} < u$. This is quite simple: As mentioned earlier, 
we start by letting $i_1$ be the job scheduled at $v$. Suppose that for some $c \ge 1$ 
we have already chosen jobs $i_1,i_2,...,i_c$ such that $i_c$ is scheduled in $B$
and each $i_b$, for $b = 2,3,...,c$, is scheduled at $r_{i_{b-1}}$.
The choice of this sequence implies that $r_{i_c} < r_{i_{c-1}} < ... < r_1 = v$.
If $r_{i_c} < u$, we let $q =c$ and we are done. So suppose that $r_{i_c} \ge u$.
Since all release times are different, we have $r_{i_c} < r_{i_{c-1}}$. We then take $i_{c+1}$ to be the job 
scheduled at $r_{i_c}$. By repeating this process, we obtain the desired sequence.

Given the jobs $i_1,i_2,...,i_q$ from the previous paragraph, we can modify the
schedule by scheduling $i_q$ at time $u-1$, and scheduling each
 $i_b$, $b = 1,2,...,q-1$ at $r_{i_b}$. This will result in shifting $B$ to the left
by one slot, proving the lemma. 
\qed
\end{proof}


\paragraph{Interval hitting.}
For some of our scheduling problems it is useful to consider their ``continuous'' analogues
obtained by assuming that
all release times and deadlines are spread very far apart; thus in the limit we can
think of jobs as having length $0$. Each $r_j$ and $d_j$ (if deadlines are in the
instance) is a point in time, and to ``schedule'' $j$ we assign it to a point in 
the interval $[r_j,d_j]$. Two jobs that would be assigned to consecutive slots
in a discrete schedule will then end up being on the same point.
This continuous problem is then equivalent to  computing a hitting set for a given collection of intervals
on the real line, with some conditions involving gaps in-between its consecutive points.

More formally, in the hitting-set problem we are given a collection
of intervals $I_j = [r_j,d_j]$, where $r_j,d_j$ are real numbers. Our objective is to compute a set
$H$ of points such that $H\cap I_j\neq\emptyset$ for all $j$.
This set $H$ is called a \emph{hitting set} of the intervals $I_1,I_2,...,I_n$.
(This formalism corresponds to scheduling problems with deadlines and where all jobs need
to be scheduled; it can be easily adapted in a natural way to other variants that we study,
when jobs may not have deadlines, or when some jobs do not need to be scheduled.) 

If $H$ is a hitting set of intervals $I_1,I_2,...,I_n$, then for each $j$ we can pick
a \emph{representative} $h_j\in H\cap I_j$. Sorting these
representatives from left to right, $h_{i_1} \le h_{i_2} \le ... \le h_{i_n}$,
the intervals between consecutive representatives are called \emph{gaps} of $H$. The length of the gap
between $h_{i_b}$ and $h_{i_{b+1}}$ is $h_{i_{b+1}} - h_{i_b}$.

For each gap scheduling problem we can then consider the corresponding
hitting-set problem. For example, minimizing the number of gaps in a schedule
translates into a minimum-cardinality hitting set for a collection of intervals. It is well known
(folklore) that this problem can be solved with a greedy algorithm in time
$O(n\log n)$: Initialize $H=\emptyset$. Then, going from left to right, at each step locate the earliest-ending
interval $I_j$ not yet hit by the points in $H$ and add $d_j$ to $H$.

These interval-hitting problems are conceptually easier to deal with than their discrete counterparts.
As we show, some algorithms for interval-hitting problems extend to their
corresponding gap scheduling problems, while for other these discrete variants require
different techniques.


\section{Maximizing Throughput with Budget for Gaps}
\label{sec: maximizing throughput with budget for gaps}
%
%



In this section we consider a variant of gap scheduling where we 
want to maximize throughput (that is, the number of scheduled jobs), given a budget $\gamma$ for the number of gaps. 
We first show that the continuous version of this problem can be solved in time
$O(\gamma n^2)$. For the discrete case we give an algorithm with running time $O(\gamma n^6)$.


\paragraph{Continuous case.}
Formally, the continuous variant of the problem is defined as follows. We are given a collection
of intervals $I_j = [r_j,d_j]$, $j=1,2,...,n$ and a positive integer $\xi \le n$. The objective
is to compute a set $H$ of at most $\xi$ points that hits the maximum
number of intervals, where a point is said to \emph{hit} a set if it belongs to this set. 
(Here, $\xi$ corresponds to the number of blocks, so its value is one more than the number of gaps.)
Without loss of generality we only need to consider sets $H\subseteq \braced{d_1,d_2,...,d_n}$ and we can assume that all
release times and deadlines are different.

There is a simple dynamic-programming algorithm for this problem that works as follows.
Order the intervals according to deadlines, that is $d_1 < d_2 < ... < d_n$.
For $h = 1,2,...,\xi$ and $b = 1,2,..., n$, let
$\Thput_{b,h}$ be the maximum number of input intervals that can be hit by a subset
$H \subseteq \braced{d_1,d_2,...,d_b}$ such that $|H|\le h$ and $d_b\in H$. 
For all $b$, we first initialize $\Thput_{b,1}$ to be the number of intervals that contain $d_b$.
Similarly, for all $h$, we let $\Thput_{1,h}$ to be the number of intervals that contain $d_1$. 
Then, for all $h = 2,3,...,\xi$ and $b = 2,3,...,n$, we can compute $T_{b,h}$ using the recurrence:
\begin{equation*}
\Thput_{b,h} = \max_{a<b} \braced{\Thput_{a,h-1} + \Delta_{a,b}},
\end{equation*}
where $\Delta_{a,b}$ is the number of intervals $I_i$ such that
$d_a < r_i \le d_b \le d_i$, namely the intervals that are hit by $d_b$ but not by $d_a$.
The output value is $\max_b\Thput_{b,\xi}$.

With a bit of care, all values $\Delta_{a,b}$ can be pre-computed in time $O(n^2)$:
First sort all release times and deadlines. For each $a$, consider only
intervals $I_i$ to the right of $d_a$, namely those with $r_i > d_a$. We will make a sweep through release times and deadlines,
starting at $d_a$, and for each visited point counting the number of intervals hit by this point. 
We start with $x=d_a$ and with a counter $q$ initialized to $0$.
Then iteratively increment $x$
to the next release time or deadline, whichever is earliest. At each step update $q$,
by increasing it if the new point is a release time and decreasing it if the current point is a deadline.
If the new point is $x = d_b$, record the value of $q$ as $\Delta_{a,b}$. This sweep costs time $O(n)$.

This gives us an algorithm with running time $O(\xi n^2)$, because we have
$O(\xi n)$ values $\Thput_{b,h}$ to compute, each computation taking time $O(n)$. 

\smallskip
\emph{Note:}
As we found out after completing the initial version of this manuscript, an algorithm with the same complexity was
given earlier in~\cite{Jansen_etal_disjoint_clique_97}. We have decided to
retain the above solution in the paper as it provides useful context for the discrete case
considered next, accentuating the contrast between the continuous and discrete variants.
Also, recently Damaschke~\cite{Damaschke_refined_17} gave a more efficient
algorithm for the special case when the interval graph induced by intervals $I_1,I_2,...,I_n$ is sparse.


\paragraph{Discrete case.}
For the discrete case, when we schedule unit jobs, a more intricate
dynamic programming approach is needed. 
The fundamental idea of our approach is similar to that in~\cite{Baptiste:min-idle-periods,Baptiste-etal-07,Baptiste_etal_polynomial_12}.

A rough intuition here is that scheduling some jobs with short spans, which are more restricted,
may create a lot of gaps. (A span of job $j$ is $d_j-r_j+1$, the length of the interval where it can be scheduled.)
We would like to distribute jobs with longer spans, as many as possible, to fill many of these gaps. The remaining gaps may be then
filled with jobs that have even longer spans, and so on. Figure~\ref{fig: min number of gaps example}
shows an example of an instance and a schedule that maximizes throughput for the budget of $2$ gaps.

\begin{figure}[ht]
\centering
\includegraphics[width=4in]{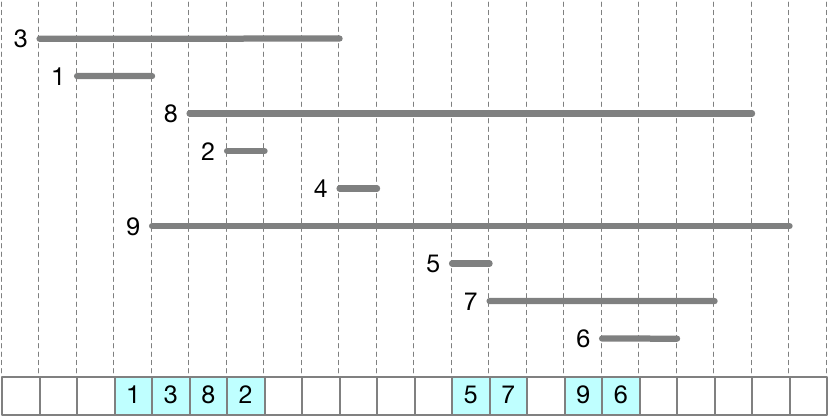} 
\caption{An example of an instance with $n=9$ jobs and its schedule with maximum throughput of $8$ for the budget of $2$ gaps. 
		(There are other optimal schedules.)
		Each job $j$ is represented by a horizontal line segment starting at slot $r_j$ and ending at slot $d_j$.	
}
\label{fig: min number of gaps example}
\end{figure}

Denote by $\calJ$ the set of jobs on input, ordered by deadlines, that is $d_1 < d_2 < ... < d_n$.
(In Section~\ref{sec: preliminaries} we showed that we can assume all deadlines to be different.)
For each job $k$ and times $u\le v$, let $\calJ_{k,u,v}$ denote the sub-instance of $\calJ$ that consists
of all jobs $j\in\braced{1,2,...,k}$ that satisfy $u \le r_j \le v$.
Define $\Thput_{k,u,v,g}$ to be the maximum number of jobs
from $\calJ_{k,u,v}$ that can be scheduled in the interval $[u,v]$ with the number of 
gaps not exceeding $g$. Here, the initial and final gap (between $u$ and the first
job, and between the last job and $v$) are also counted, if present.

To derive a recurrence for $\Thput_{k,u,v,g}$ we reason as follows. If $\calJ_{k,u,v} = \emptyset$ then
$\Thput_{k,u,v,g} = 0$. If $\calJ_{k,u,v} \neq \emptyset$ and $k\notin \calJ_{k,u,v}$
then $\Thput_{k,u,v,g} = \Thput_{k-1,u,v,g}$. So for the rest of the derivation assume that $k\in \calJ_{k,u,v}$.

Consider an optimal schedule $S$ for $\calJ_{k,u,v}$, that is the one that realizes $\Thput_{k,u,v,g}$. 
If $k$ is not scheduled by $S$, then
$\Thput_{k,u,v,g} = \Thput_{k-1,u,v,g}$. In the remaining cases we assume that $k$ is scheduled by $S$,
say at time $t$, where $u \le r_k \le t\le \min(v,d_k)$.    

\begin{figure}[ht]
\centering
\includegraphics[width=5.25in]{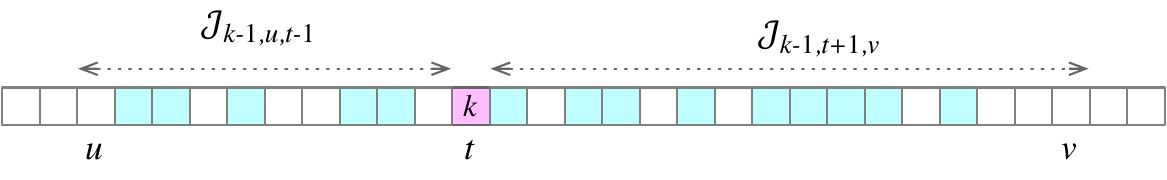} 
\caption{An illustration of the recurrence for $\Thput_{k,u,v,g}$.
}
\label{fig: max throughput gap budget recurrence}
\end{figure}


Naturally, all jobs from  $\calJ_{k-1,t+1,v}$ that are scheduled by $S$ are scheduled in $[t+1,v]$. 
As explained in Section~\ref{sec: preliminaries}, we can assume that $S$ has the EDF property.
Thus no job from $\calJ_{k-1,u,t-1}$ can be scheduled in $[t+1,v]$ because such a job has an earlier
deadline than $k$ and so it cannot be pending in $S$ at time $t$. 
So all jobs from $\calJ_{k-1,u,t-1}$ that are scheduled by $S$ are scheduled in $[u,t-1]$. 
Further, for the same reason, if there is a job in $\calJ_{k,u,v}\setminus\braced{k}$
released at time $t$ then it cannot be scheduled by $S$. (In fact, we can assume that such job
does not exist, because otherwise we could swap it with $k$, as $k$'s deadline is larger. But we do not use this observation
in the algorithm.)

The above paragraph gives us the optimal substructure property needed for a dynamic-programming formulation. 
Specifically, using the optimality of $S$  and letting $h$ be the number of gaps in $[u,t-1]$ in $S$, 
we have that the portion of $S$ in $[u,t-1]$ is a schedule of $\calJ_{k-1,u,t-1}$ with at most
$h$ gaps and maximum throughput, and the portion of $S$ in $[t+1,v]$ is a schedule of
$\calJ_{k-1,t+1,v}$ with at most $g-h$ gaps and maximum throughput. 
(See Figure~\ref{fig: max throughput gap budget recurrence} for illustration.)
Therefore $\Thput_{k,u,v,g} = \Thput_{k-1,u,t-1,h} + \Thput_{k-1,t+1,v,g-h}+1$. 
 
Overall, for $k\in \calJ_{k,u,v}$, the argument above gives us the following formula for $\Thput_{k,u,v,g}$: 
\begin{equation}
\Thput_{k,u,v,g} \;=\;
	\max\braced{\begin{array}{c} 
				\Thput_{k-1,u,v,g}  
				\\ [3pt]
				\displaystyle
				\max_{\begin{subarray}{c}
				r_k\le t \le\min(d_k, v)\\
				0 \le h \le g
				\end{subarray}} 
				\braced{ \Thput_{k-1,u,t-1,h}  + \Thput_{k-1,t+1,v,g-h} } + 1 
			\end{array}
			}
\label{eqn: max throughput idea}
\end{equation}
The solution of the original instance $\calJ$ is $\Thput_{n,\rmin-1,d_n+1,\gamma}-2$,
where $\rmin$ is the minimum release time. (Recall that $d_n$ is the maximum deadline, by
the deadline ordering.) We subtract $2$ to account for the initial and final gap which
will always be present in the overall solution for $\calJ_{k,u,v}$ as we start with the interval $[u,v] = [\rmin-1,d_n+1]$.

To achieve polynomial time we still need to somehow limit the ranges of $u$, $v$ and $t$ in~(\ref{eqn: max throughput idea})  
to some polynomial-size domain. This can be achieved using
Lemma~\ref{lem: shift} which implies that we only need to consider schedules in which every block ends at some release time. 

Define $R = \braced{r_i \suchthat 1\le i\le n}$ to be the set of all release times, 
and for any interval $[x,y]$ of integers define $R+ [x,y] =  \braced{r+z \suchthat r\in R \;\&\; z\in [x,y]}$. 
(For $y=x$ we will simplify this notation and write $R+[x]$ instead of $R+[x,x]$.)
Then, by the above paragraph, we can assume that all busy slots are in the set $R+[-n+1,0]$.
The slot $t$ in the bottom option on the right-hand side of recurrence~(\ref{eqn: max throughput idea}) is always busy, and 
the slots $u$ and $v$ are equal, respectively, $t+1$ and $t-1$, so they are either busy or are adjacent to a busy slot.
Therefore we can restrict the ranges of $u$, $v$ and $t$ to the set $R+[-n,1]\cup\braced{d_n+1}$, which has cardinality $O(n^2)$. 
(We need to also include $d_n+1$, which is the value of $v$ in the solution for the whole instance $\calJ$.)
This gives us a bound of $O(\gamma n^5)$ on the number of
values $\Thput_{k,u,v,g}$ to be computed, each requiring time $O(\gamma n^2)$.
Thus the overall running time is $O(\gamma^2 n^7)$.


\emparagraph{A faster algorithm.}
We now show how to improve this running time by two orders of magnitude.
To this end, we further restrict the range of the left endpoint $u$ to the set $R$. This will involve a
slight modification of the recurrence and the instance (adding an artificial ``dummy'' tight job).
The second improvement is obtained by distinguishing two cases, depending on whether
or not $k$ is the last job in the optimal schedule. If $k$ is not last,
we can reduce the range of $t$ to $R+[-1]$, and if $k$ is last then we can eliminate the maximization over $h$.
The details follow.

As a first step, we claim that we can assume that in the original instance $\calJ$
the first job is a tight job separated from the rest of the instance, that is $r_1 = d_1 \le \min_{j \neq 1}r_j - 2$.
If the first job does not satisfy this property, we can simply add such a job, without affecting the asymptotic
running time. The optimal value for the whole instance $\calJ$ will be computed as $\Thput_{n,\rmin,d_n+1,\gamma}-1$,
with $1$ subtracted to account for the final gap. So in this initial recursive call the second parameter is in $R$.
(If job $1$ was artificially added to $\calJ$, the optimal solution for  $\calJ\setminus\braced{1}$ can be recovered from
the optimal solution of $\calJ$ by subtracting $1$, to account for the gap between job $1$ and the rest of the schedule.)

Then we proceed by induction. Consider a sub-instance $\calJ_{k,u,v}$, with $u\in R$, for which we want to compute $\Thput_{k,u,v,g}$.
We can assume that $k\in \calJ_{k,u,v}$, as otherwise $\Thput_{k,u,v,g} = \Thput_{k-1,u,v,g}$.
We have two cases, depending on whether $k$ is last or not. 

Suppose that $k$ is not last. In this case we can assume that there is a job scheduled right after $k$, at time $t+1$,
for otherwise we could reschedule $k$ as the first job in the next block, without increasing the number of gaps.
(Here we use the fact that $k$ has maximum deadline in $\calJ_{k,u,v}$.) 
By the EDF property, no scheduled jobs in $\calJ_{k,u,v}\setminus \braced{k}$ are pending
at time $t$. Thus the job scheduled at time $t+1$, say $c$, is scheduled at 
its release time $r_c=t+1$. Therefore in this case we have
$\Thput_{k,u,v,g} = \Thput_{k-1,u,t-1,h}  + \Thput_{k-1,t+1,v,g-h} + 1$ for some $h$ (as in 
recurrence~(\ref{eqn: max throughput idea})), where $u, t+1\in R$, and $t\in R+[-1]$.

Next, assume that $k$ is scheduled last. In this case we do not know how to further restrict the
range of $t$, but instead we can avoid maximization over $h$. The optimal substructure 
property holds here as well, that is the portion of $S$ in the interval $[u,t-1]$ must be
an optimal schedule for the corresponding sub-instance. Thus the recurrence has two sub-cases:
If $t=v$ then there is no final gap and $\Thput_{k,u,v,g} = \Thput_{k-1,u,v-1,g} + 1$.     
Otherwise, there is a final gap and $\Thput_{k,u,v,g} = \Thput_{k-1,u,t-1,g-1} + 1$. 
In both cases the second parameter in the recursive call is $u\in R$ and there is no maximization with respect to $h$.


\medskip
\myparagraph{Algorithm~{\MaxThrpt}.}
As explained above, we assume in the algorithm that $r_1 = d_1 \le \min_{j \neq 1}r_j - 2$.
For all $k = 0,1,...,n$ and time slots $u,v$, where $u\in R$, $v\in R+[-n,1]\cup\braced{d_n}$ and $u\le v$,
we process all instances $\calJ_{k,u,v}$ in order of increasing $k$, and for each
$k$ in order of increasing interval length, $v-u$. For each instance $\calJ_{k,u,v}$
and each gap budget $g=0,1,...,\gamma$ we compute the corresponding value $\Thput_{k,u,v,g}$. 
If some value of $\Thput_{k,u,v,g}$ appears on the right-hand side of the recurrence with
$v$ outside its range (that is when $v\notin R+[-n,1]\cup\braced{d_n}$), then we assume that $\Thput_{k,u,v,g} = -\infty$.

First, if $\calJ_{k,u,v} = \emptyset$, we let $\Thput_{k,u,v,g} =0$. Assume now that $\calJ_{k,u,v} \neq \emptyset$.
If $k\notin \calJ_{k,u,v}$ (which means that $r_k \notin [u,v]$)
then $\Thput_{k,u,v,g} =  \Thput_{k-1,u,v,g}$. Otherwise, we compute
$\Thput_{k,u,v,g}$ using the following recurrence:
\begin{equation}
	\Thput_{k,u,v,g} \;=\;
		\max\braced{\begin{array}{l}
					\Thput_{k-1,u,v,g}  
					\\[5pt] 
					\displaystyle
					\max_{\begin{subarray}{c}
						t \in R'\\
						0 \le h \le g
						\end{subarray}} 
					\braced{ \Thput_{k-1,u,t-1,h}  + \Thput_{k-1,t+1,v,g-h} } + 1
					\\[20pt] 
					\displaystyle
					\max_{t\in R''} 
					\braced{ \Thput_{k-1,u,t-1,g-1}} + 1
					\\[10pt] 
					\Thput_{k-1,u,v-1,g} + 1 \quad\textrm{if}\ d_k\ge v {\hfill}
				\end{array}
				}
\label{eqn: max throughput polynomial} 
\end{equation}
where the ranges of $t$ above are
\begin{align*}
      R' \;&=\; (R+[-1]) \cap [r_k, \min (d_k,v) ]
\\
	  R'' \;&=\; (R+[-n+1,0]) \cap [r_k, \min (d_k,v) ]  
\end{align*}
The algorithm outputs $\Thput_{n,\rmin,d_n+1,\gamma}-1$ as the solution to the whole instance $\calJ$.
(This formula is explained before the statement of the algorithm.)

As discussed earlier, with the above restrictions on $u$, $v$ and $t$, we have
$n$ choices for $u$ and $O(n^2)$ choices for $v$. With $n+1$ choices for $k$ 
and $\gamma+1$ choices for $g$, the size of the $\Thput_{k,u,v,g}$ table is $O(\gamma n^4)$. 
In the above recurrence, in the second option we iterate over up to $n$ choices for $t$ and
$\gamma+1$ choices for $h$, and in the third option we iterate over up to $n^2$ choices for $t$.
So the overall running time is $O(\gamma n^6)$. 

Summarizing, we obtain the following theorem:

\begin{theorem}\label{thm: max throughput}
For any instance $\calJ$ and a gap budget $\gamma\le n$, Algorithm~{\MaxThrpt} in time $O(\gamma n^6)$
computes a schedule of $\calJ$ that has maximum throughput among all schedules with at most $\gamma$ gaps.  	
\end{theorem}


\paragraph{Weighted throughput.}
Theorem~\ref{thm: max throughput}  easily extends to the
model where jobs have non-negative weights and the objective is to maximize
the weighted throughput. The only change is that instead of adding $1$ in the
recurrence for $\Thput_{k,u,v,g}$ we would add the weight of $k$. The running time remains the same.


\section{Minimizing the Number of Gaps with Throughput Requirement}
\label{sec: min gaps with throughput requirement}
%
%



Suppose now that we want to minimize the number of gaps under a throughput requirement, 
that is we want to find a schedule that schedules at least a given number $m \in\braced{0,1,...,n}$ 
of jobs while minimizing the number of gaps. Without loss of generality we can assume that there 
exists a schedule with
throughput at least $m$; in fact, as explained in Section~\ref{sec: preliminaries},
we can even assume that the whole instance is feasible.

We can solve this problem, both the continuous and discrete version,
by leveraging the algorithms from the previous section.
We explain the solution for the continuous variant; the solution of the
discrete case can be obtained in an analogous manner.

Recall that $\Thput_{b,h}$ was defined to be the  
maximum number of intervals that can be hit with a subset of $\braced{d_1,d_2,...,d_b}$ that includes $d_b$
and has cardinality at most $h$. We can use all these values to compute
$\Thput_h$, which is the maximum number of intervals that can be hit with a set of cardinality at most $h$
(without any additional restrictions). Computing these for all $h$ will take time $O(n^3)$.
By definition, we have $\Thput_1 \le \Thput_2 \le ... \le \Thput_n$. Then, given our requirement $m$ on the throughput, we compute the
smallest $h$ for which $\Thput_{h}\ge m$. This $h$ is the output of the algorithm. The total running time will be $O(n^3)$.

\smallskip

An essentially identical scheme will produce an algorithm for the discrete case with running time $O(n^7)$,
giving us the following result.

\begin{theorem}\label{thm: main gaps with throughput requirement}
For any instance $\calJ$ and $m\le n$, the above-described algorithm in time $O(n^7)$
computes a schedule of $\calJ$ that has the minimum number of gaps among all schedules
with throughput at least $m$.  
\end{theorem}

\emph{Comment:} The time bounds for the continuous and discrete versions can be
refined by expressing them in terms of the optimum number $g^\ast$ of gaps.  
This can be achieved by stopping the computation of the recurrence formulas for the smallest $h$ for
which the throughput is reached. For the discrete case, the running time will then be $O(g^\ast n^6)$.


\paragraph{Weighted throughput.}
Similar to the previous section, Theorem~\ref{thm: main gaps with throughput requirement}
also holds for the weighted throughput case, as the algorithm does not depend on the
threshold value $m$ being bounded by $n$.


\section{Maximizing the Number of Gaps}
\label{sec: maximizing the number of gaps}
%
%



In the preceding sections we studied  problems where we were interested in
schedules with as few gaps as possible. However, in some applications, gaps in the
schedule may actually be desirable. This can arise, for example, when the input stream consists 
of two types of jobs, some with high priority and other with low priority. High-priority
jobs are allowed to reserve their slots in advance, while low-priority jobs are executed only
if there are slots available. We can then
schedule high-priority jobs first, and maximizing the number of gaps in their schedule 
would help to improve throughput and latency for low-priority jobs.  
One such specific scenario appears in QoS networks when coordination of access to a Wi-Fi channel 
is implemented using so-called point coordination function (PCF) mechanism~\cite{pcf-wikipedia}.
One of the features of PCF is that it inserts gaps (in our terminology)
into the schedule of high-priority traffic in order to allow low-priority traffic to access the channel.

Thus in this section we will examine the variant of gap scheduling where the objective is to
create as many gaps as possible in the schedule. The continuous version of
this problem is trivial: for any interval $I_j = [r_j,d_j]$ with $r_j = d_j$, we must of course
choose $h_j = r_j$. Each interval $I_j = [r_j,d_j]$ with $r_j<d_j$ can be assigned a unique 
point $h_j\in I_j$. Thus in this section we will focus only on the discrete model.

\smallskip

Specifically, we are again given an instance $\calJ$ with $n$ unit jobs with release times
and deadlines, and we assume that the instance is feasible, that is all jobs can be scheduled.
The objective is to find a schedule for $\calJ$ (with all jobs scheduled) that maximizes the number of gaps.
As before, we will assume that all jobs have different deadlines and different
release times, and that they are ordered according to increasing deadlines, $d_1 < d_2 < ... < d_n$. 
We can  also assume that jobs $1$ and $n$ satisfy
$r_1 = d_1 = \min_{j > 1} r_j -2$ and $d_n = r_n = \max_{j < n} d_j+2$, that is,
they are tight jobs executed at the beginning and end of the schedule, separated by gaps from other jobs.
Such jobs can be added to the instance, increasing the number of gaps uniformly by $2$ for 
all schedules; thus the choice of the optimum schedule is not affected, only its value increases by $2$. 
(This is a technical assumption that allows us to fix the range of the dynamic program below.)
Figure~\ref{fig: max number of gaps} shows an example of an instance $\calJ$ with $n = 10$ jobs and its schedule 
with $7$ gaps.

\begin{figure}[ht]
\centering
\includegraphics[width=3in]{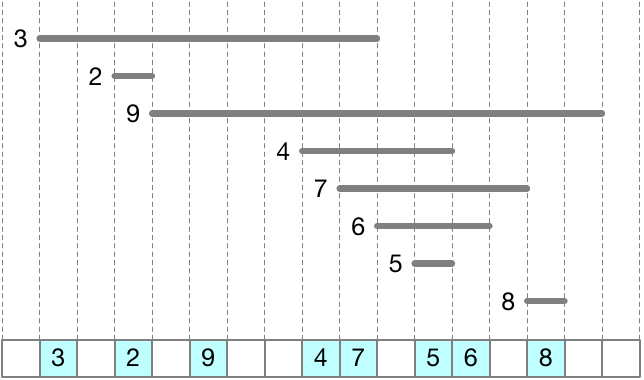} 
\caption{An example of an instance and its schedule with maximum number of gaps, for $n= 10$. 
		Each job $j$ is represented by a horizontal line segment starting at slot $r_j$ and ending at slot $d_j$.
		The special jobs $1$ and $n = 10$ are not shown. In this schedule we have $7$ gaps, which includes
		the gap between jobs $1,3$ and the gap between job $8,10$.		
}
\label{fig: max number of gaps}
\end{figure}


As in Section~\ref{sec: maximizing throughput with budget for gaps}, for any job $k = 1,2...,n$ and two time steps $u \le v$ define
$\calJ_{k,u,v}$ to be the sub-instance of $\calJ$ that consists of all jobs $j\in\braced{1,2,...,k}$ that satisfy $u \le r_j \le v$.
By the assumption about different deadlines, each sub-instance $\calJ_{k,u,v}$ is feasible.
Define $\MGaps_{k,u,v}$ to be the maximum number of gaps in a schedule
of $\calJ_{k,u,v}$ in the interval $[u,v]$. In $\MGaps_{k,u,v}$ we include the extremal gaps in the schedule (if any),
namely the initial gap between $u$ and the first job and the final gap between the last job and $v$.


\begin{lemma}\label{lem: max gaps left adjusted}
For any sub-instance $\calJ_{k,u,v}$ there is a schedule $S$ with $\MGaps_{k,u,v}$
gaps in the interval $[u,v]$ that satisfies the following two conditions:
\begin{description}
	\item{(i)} For any job $j\in \calJ_{k,u,v}$, if $j$ is scheduled at time $S_j$
		then all gaps in the interval $[r_j,S_j]$ have length at most $2$ (including
		the gap between $r_j$ and the first job).
	\item{(ii)} For each block $B$ of $S$, either all jobs in $B$ are scheduled
		at their release times or, assuming that $B$ does not start at $u$,
		the gap immediately to the left of $B$ has length $1$.
\end{description}
\end{lemma}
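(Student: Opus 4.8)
The plan is to single out one optimal schedule by a tie-breaking rule and show it automatically satisfies (i) and (ii), using only leftward exchange arguments in the spirit of Lemma~\ref{lem: shift}. Among all schedules of $\calK_{k,u,v}$ that achieve $\MGaps_{k,u,v}$ gaps in $[u,v]$ \emph{and} have the EDF property, I would let $S$ be one whose list of busy slots, written in increasing order, is lexicographically smallest. There are finitely many such schedules, and, by the standard exchange recalled in Section~\ref{sec: preliminaries}, any schedule can be turned into an EDF one with the same set of busy slots; so after any valid modification that leaves all jobs scheduled I may re-apply that exchange and return to an EDF schedule with the same busy slots, hence the same gap count. The key observation is that two elementary moves strictly lower the busy-slot tuple: shifting an entire block one slot to the left, and relocating a single job to an earlier idle slot. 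Consequently, whenever one of these moves can be performed so that the resulting schedule still has at least $\MGaps_{k,u,v}$ gaps, we reach a contradiction — with optimality of $S$ if the gap count strictly increases, and with the lexicographic choice of $S$ if it stays equal. The whole proof then amounts to showing that a violation of (i) or of (ii) makes exactly such a move available.

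To establish (ii), suppose $B=[p,q]$ is a block with $p>u$ containing a job not scheduled at its release time, and the gap directly to the left of $B$ has length $\ge 2$ (so slots $p-1$ and $p-2$ are idle). If the job at the right endpoint $q$ has release time strictly before $q$, then Lemma~\ref{lem: shift} reschedules the jobs of $B$ inside $[p-1,q-1]$; the left gap only shrinks to length $\ge 1$, so it survives and no merge with the block further left occurs, hence the gap count does not drop — a contradiction. If instead the job at $q$ \emph{is} at its release time, let $t^\ast$ be the last slot of $B$ whose job is not at its release time, so $p\le t^\ast<q$. Applying the construction inside the proof of Lemma~\ref{lem: shift} to the sub-block $[p,t^\ast]$ — legitimate because the job at $t^\ast$ is released strictly before $t^\ast$ and slot $p-1$ is idle — reschedules the jobs currently in $[p,t^\ast]$ into $[p-1,t^\ast-1]$; this shrinks the left gap to length $\ge 1$ and opens a brand-new gap $\{t^\ast\}$, so the gap count strictly increases, contradicting optimality. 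Hence no such block $B$ exists, which is exactly (ii).

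To establish (i), suppose a job $j$ is scheduled at $s=S_j$ and there is a run of idle slots $[g_1,g_2]$ with $r_j\le g_1$, $g_2\le s-1$, and $g_2-g_1+1\ge 3$ (the "gap between $r_j$ and the first job" being the subcase $g_1=r_j$). I would move $j$ to slot $g_1+1$, which is idle and, since $r_j\le g_1<g_1+1<g_2<s\le d_j$, lies in $[r_j,d_j]$. Because the run has length $\ge 3$, this splits it into the two nonempty gaps $[g_1,g_1]$ and $[g_1+2,g_2]$, adding one gap near $g_1$; and deleting $j$ from $s$ lowers the gap count near $s$ by at most one, and by exactly one only when $j$'s block was the singleton $\{s\}$ (whose two neighbouring slots are then necessarily idle, so the two surrounding gaps merge). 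Thus the total gap count stays at least $\MGaps_{k,u,v}$, while the busy-slot tuple strictly drops (as $g_1+1<s$); the usual dichotomy — contradiction with optimality if the count went up, with the choice of $S$ if it stayed equal — finishes (i).

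The main obstacle is the second case in the argument for (ii): once the last job of a block is pinned at its release time, Lemma~\ref{lem: shift} no longer applies to the block as a whole, so one must instead shift only the maximal initial sub-block that ends at a non-release slot, even though this \emph{creates} a gap — which is harmless here precisely because creating a gap would only improve a supposedly gap-maximal schedule. A smaller point worth getting right is that in (i) the displaced job must land in the \emph{interior} of the oversized gap, so the gap splits into two and the count is preserved rather than being increased further than the argument needs; everything else is routine bookkeeping about how block endpoints and singleton blocks affect the gap count.
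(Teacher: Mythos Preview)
Your proof is correct and follows essentially the same approach as the paper: both argue via lexicographic minimality of the busy-slot set among optimal EDF schedules, and both handle~(i) by relocating $j$ into the interior of an oversized gap. The only notable difference is in~(ii): the paper uses a simpler single-job move, taking the \emph{first} job $l$ in $B$ that is not at its release time (distinctness of release times then forces $r_l < p$) and moving just that one job to slot $p-1$; this either shifts the block by one (if $l$ was the last job) or opens an interior gap, so the gap count never drops. Your two-case argument via Lemma~\ref{lem: shift} reaches the same conclusion but with more machinery than strictly needed.
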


\begin{proof}
We show that we can modify any schedule $S$ with $\MGaps_{k,u,v}$ gaps
to have properties (i) and (ii).

First, suppose that some job $j$ violates property (i), that is $S$ has
a gap $[x,x']$ such that $r_j \le x < x+2 \le x' \le S_j-1$.  We can then move $j$ to time slot $x+1$. 
Removing $j$ from time slot $S_j$ can decrease the number of gaps at most by $1$
(if $j$ was in a block by itself). Rescheduling $j$ at time $x+1$ will increase
the number of gaps by $1$. Thus overall the number of gaps cannot decrease.

If $S$ has a block $B = [y,y']$ that violates property (ii), choose $j$ to be the first job in $B$
with $S_j > r_j$. Since all release times are 
different, we must have $r_j < y$. We can then move $j$ to slot $y-1$ and,
since the gap that precedes $B$ has length at least $2$, the number of gaps will not decrease.

The two operations above convert the current schedule $S$ into a new schedule $S'$
whose set of busy slots is lexicographically smaller than that of $S$. The number
of gaps in $S'$ is the same or larger than the number of gaps in $S$.
Thus this process must eventually end, producing a schedule that satisfies conditions (i) and (ii).
\qed
\end{proof}

At the very fundamental level, the idea behind our algorithm is similar to that in 
Section~\ref{sec: maximizing throughput with budget for gaps}. We use dynamic programming
to compute all values $\MGaps_{k,u,v}$. Assume that
$k\in \calJ_{k,u,v}$, for otherwise $\MGaps_{k,u,v} = \MGaps_{k-1,u,v}$.
Suppose that, in some optimal schedule $S$ for $\calJ_{k,u,v}$, $k$ is scheduled at some time $t\in[u,v]$. 
Obviously, we have $t\ge r_k \in [u,v]$.
By the EDF property, $t$ itself cannot be a release time of any job in $\calJ_{k,u,v}$ other than $k$.
This property is important for the correctness of our
recurrence, as it implies that $\calJ_{k,u,v}$ can be partitioned into three
disjoint sets: $ \calJ_{k,u,v} = \calJ_{k-1,u,t-1}\cup\braced{k}\cup \calJ_{k-1,t+1,v}$.
Naturally, all jobs in $\calJ_{k-1,t+1,v}$ are scheduled by $S$ in $[t+1,v]$.
Further, using the EDF property again, all jobs in $\calJ_{k-1,u,t-1}$
cannot be scheduled after $t$, so they are all scheduled in $[u,t-1]$.
This implies the following optimal substructure property:
the portion of $S$ in $[u,t-1]$ is an optimal schedule of $\calJ_{k-1,u,t-1}$, 
and the portion of $S$ in $[t+1,v]$ is an optimal schedule of $\calJ_{k-1,t+1,v}$.
We thus conclude that $\MGaps_{k,u,v} = \MGaps_{k-1,u,t-1} + \MGaps_{k-1,t+1,v}$.

Since we do not know $t$ a priori, we can maximize the expression on the right-hand
side over all choices of $t$, giving us the recurrence for $\MGaps_{k,u,v}$ (in the
case when $k\in \calJ_{k,u,v}$):
\begin{equation}
	\MGaps_{k,u,v} \;=\; \max_{\begin{subarray}{c}
									r_k\le t\le \min(v,d_k)\\ t\notin R_{k-1,u,v}
								\end{subarray}}
								\braced{ \MGaps_{k-1,u,t-1} + \MGaps_{k-1,t+1,v} }		
					\label{eqn: max gaps idea}
\end{equation}
where we use notation $R_{k-1,u,v}$ for the set of release times of the jobs in $\calJ_{k-1,u,v}$.
Note that the range of the maximum above is not empty, because $r_k\le \min(v,d_k)$ and $r_k\notin R_{k-1,u,v}$, 
so $r_k$ is a candidate for $t$.
We still need to show that we can reduce the ranges of $u$, $v$ and $t$ in~(\ref{eqn: max gaps idea})
to some polynomial-size domain.

We claim that we only need to consider instances $\calJ_{k,u,v}$ where $u,v\in R+[-1,3n+1]$.
(See Section~\ref{sec: maximizing throughput with budget for gaps} for the definition of sets $R+[x,y]$.)
Indeed, this follows from Lemma~\ref{lem: max gaps left adjusted}(i), which implies that in the
recurrence (\ref{eqn: max gaps idea}) for $\MGaps_{k,u,v}$ we only need to consider slots $t$
between $r_k$ and $r_k + 3n$, inclusive. Thus, in the sub-instances $\calJ_{k-1,u,t-1}$
and $\calJ_{k-1,t+1,v}$ the new arguments $v' = t-1$ and $u' = t+1$ will satisfy
$v', u' \in \braced{r_k-1, r_k, ..., r_k + 3n+1} \subseteq R+[-1,3n+1]$. 
The initial arguments are $r_1$ and $d_n = r_n$, both in $R+[-1,3n+1]$, completing the
proof of our claim. As $| R+[-1,3n+1]| = O(n^2)$, this gives us $O(n^5)$ instances $\calJ_{k,u,v}$ to consider.
For each $\calJ_{k,u,v}$, using Lemma~\ref{lem: max gaps left adjusted}(i),
to compute $\MGaps_{k,u,v}$ it is sufficient to iterate only
over $t = r_k,r_k+1,...,\min(v,d_k,r_k + 3n)$. This would give us the overall running time $O(n^6)$.

Next, we argue that this running time can be further improved to $O(n^5)$. The
general idea is to show that, in essence, the recurrence (\ref{eqn: max gaps idea})
needs to be applied only to $O(n)$ values of $u$. To this end, we modify recurrence~(\ref{eqn: max gaps idea})
as follows:
\begin{equation}
	\MGaps_{k,u,v} \;=\; \max_{\begin{subarray}{c}
									r_k\le t\le \min(v,d_k)\\ t\notin R_{k-1,u,v}
								\end{subarray}} 
				\braced{ \MGaps_{k-1,u,t-1} + \MGaps_{k-1,\mu(t),v} }
					\label{eqn: modified max gaps recurrence modified}
\end{equation}
where $\mu(t)$ is determined based on three cases:  If $\calJ_{k-1,t+1,v} = \emptyset$, let  $\mu(t) = v+1$.
Otherwise, let $\mu' = \min\braced{r_j \suchthat j \in \calJ_{k-1,t+1,v}}$.
If $\mu' = t+1$, let $\mu(t) = t+1$, otherwise let $\mu(t) = \mu'-1$. (Note that
$\mu(t)$ depends also on $v$ and $k$, but we omit these in our notation to reduce clutter.)

We claim that~(\ref{eqn: modified max gaps recurrence modified}) is a correct
recurrence for $\MGaps_{k,u,v}$, providing that $k\in \calJ_{k,u,v}$.
Indeed, from the definition of $\mu(t)$ we have $\calJ_{k-1,t+1,v} = \calJ_{k-1,\mu(t),v}$, and
sub-instance $\calJ_{k-1,\mu(t),v}$ is scheduled inside the interval $[\mu(t),v]$. 
Finally, the optimal schedules of $\calJ_{k-1,t+1,v}$  and
$\calJ_{k-1,\mu(t),v}$ have the same number of gaps. (The reason for distinguishing between the cases
when $\mu'=t+1$ and  $\mu'\neq t+1$ was to take into account the initial gap.)

Using~(\ref{eqn: modified max gaps recurrence modified}), the recurrence
will remain correct if we restrict the range of $u$'s to the set $R+[-1,0]$, whose
cardinality is $O(n)$. Then the total number of instances $\calJ_{k,u,v}$ to consider is
$O(n^4)$, implying the running time of $O(n^5)$. The complete algorithm is described below.


\medskip

\myparagraph{Algorithm~{\MaxGaps}.}
We consider all instances $\calJ_{k,u,v}$, where $u$ and $v$ are time slots
such that $u,v\in [r_1,d_n]$ and $u\le v+1$, and $k$ is either 
a job, that is $k \in \braced{1,2,...,n}$, or $k = 0$.
We process these instances in order of increasing
$k$ and increasing difference $v-u$. For each instance $\calJ_{k,u,v}$,
the value of $\MGaps_{k,u,v}$ is computed as follows.

We first deal with the base case, when $\calJ_{k,u,v} = \emptyset$. Then,
if $u = v+1$ we let $\MGaps_{k,u,v} = 0$, and if $u\le v$ we let $\MGaps_{k,u,v} = 1$. 

So assume now that $\calJ_{k,u,v} \neq \emptyset$, which implies that $u\le v$ and $k\ge 1$. Then,          
if $k\notin  \calJ_{k,u,v}$ we let $\MGaps_{k,u,v} = \MGaps_{k-1,u,v}$.
Otherwise we have $k \in \calJ_{k,u,v}$, in which
case we compute $\MGaps_{k,u,v}$ using the following recurrence:
\begin{equation*}
	\MGaps_{k,u,v} = \max_{\begin{subarray}{c}
										r_k\le t\le \min(v,d_k,r_k+3n)   \\ t\notin R_{k-1,u,v}
						\end{subarray}}
	\braced{ \MGaps_{k-1,u,t-1} + \MGaps_{k-1,\mu(t),v}  }
\end{equation*}
After all values are computed, the algorithm outputs $\MGaps_{n,r_1,d_n}$. 
By the analysis above, we obtain the following theorem.


\begin{theorem}\label{thm: max gaps}
For any instance $\calJ$, Algorithm~{\MaxGaps} in time $O(n^5)$
computes a schedule of $\calJ$ with maximum number of gaps.  	
\end{theorem}


\section{Minimizing Maximum Gap}
\label{sec: minimizing max gap}
%
%



In the earlier sections we focussed on the number of gaps in the schedule. For
certain applications,  the \emph{size} of the gaps is also of interest. 
In this section we will study the problem where the objective is to
minimize the maximum gap in the schedule. Such schedules
tend to spread the jobs more uniformly over the time range and
produce many gaps, which may be useful in applications
discussed in Section~\ref{sec: maximizing the number of gaps}, where a good
schedule should leave some gaps between high-priority jobs, to allow other
jobs to access the processor. This could also be useful in temperature
control of the processor (see the discussion at the end of Section~\ref{sec: final comments}).

The general setting is as before. We have an instance $\calJ$ consisting
of $n$ unit jobs, where job $j$ has release time $r_j$ and deadline $d_j\ge r_j$. 
As explained in Section~\ref{sec: preliminaries}, we can assume that $\calJ$ is feasible.
The objective is to compute a schedule of all jobs that minimizes the maximum gap size.

Interestingly, this problem is structurally different from these in the previous 
sections, because now, intuitively, a good schedule should
spread the jobs more-or-less evenly in time. For example, if we have $n-2$ jobs released 
at $0$, all with deadline $D \gg n$, plus two more tight jobs $1$ and $n$ in time slots $0$ and $D$, respectively,
then we should schedule the non-tight jobs $j = 2,3,...,n-1$
at time slots $\approx (j-1) \frac{D}{n-1}$. In contrast, the algorithms in
Sections~\ref{sec: maximizing throughput with budget for gaps} and~\ref{sec: min gaps with throughput requirement}
attempted to group the jobs into a small number of blocks. 
Similar to the objective in Section~\ref{sec: maximizing the number of gaps}, a schedule
that minimizes the maximum gap size will typically create many gaps, but,
as can be seen in Figure~\ref{fig: example max gaps}, these two objective functions
will in general produce different schedules.


\begin{figure}[ht]
\begin{center}
\includegraphics[width=5in]{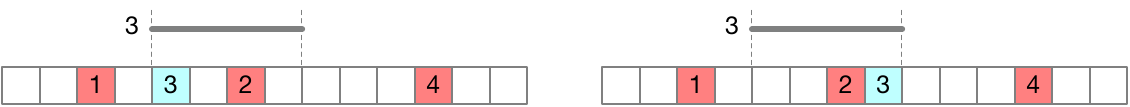}
\caption{An instance with two schedules. Red/dark shaded slots represent tight jobs.
The range of job $3$ is represented by a horizontal segment. The schedule on the left maximizes the
number of gaps. The schedule on the right minimizes the maximum gap. Both
schedules are unique optimal solutions for their respective objective functions.}
\label{fig: example max gaps}
\end{center}
\end{figure}


In this section we give an $O(n^2\log n)$-time algorithm for computing schedules
that minimize the maximum gap. We first give an algorithm for the
continuous model, and then extend it to the discrete model.


\subsection{The Continuous Case}

The continuous analogue of our scheduling problem can be formulated as follows. The input consists of $n$ intervals
$I_1,I_2,...,I_n$. As before, $I_j = [r_j,d_j]$ for each $j$.
The objective is to compute a hitting set $H$ for these intervals that
minimizes the maximum gap between its consecutive points. Another way to think about this problem is as computing a
\emph{representative} $h_j\in H\cap I_j$ for each interval $I_j$. Except for degenerate
situations (two equal intervals of length $0$), we can assume that
all representatives are different, although we will not be using this
property in our algorithm, and we treat $H$ as a multiset.

We order the intervals so that $d_1\le d_2 \le ... \le d_n$. (In this continuous version
we cannot assume all $d_j$'s are different without losing generality.) Further, we only need to
be concerned with sets $H$ that contain $d_1$, because if $H$ contains any points
before $d_1$ then we can replace them all by $d_1$ without increasing the maximum gap in $H$. 
Also, if $\max_i r_i\le d_1$ then there is a singleton hitting set, $H = \braced{d_1}$,
whose maximum gap is equal to $0$. Thus we can also assume
that $\max_i r_i > d_1$, so that we need at least two points in $H$.

Consider first the decision version: \emph{``Given $\lambda > 0$, is there a
hitting set $H$ for $I_1,I_2,...,I_n$ in which all gaps are at most $\lambda$?''}
If $\lambda$ has this property, we will call it \emph{viable}.
We first give a greedy algorithm for this decision version and
then later we show how to use it to obtain an efficient algorithm for the minimization version.


\medskip
\myparagraph{Algorithm~$\AlgViable(\lambda)$.}
We initialize $\circh_1 = d_1$ and $U = \braced{2,3,...,n}$. $U$ represents the set containing  the
indices of intervals that do not have yet representatives selected.
We move from left to right, at each step assigning a representative to one
interval in $U$, placing this representative as far to the right as possible, and we remove this interval from $U$. 

Specifically, at the beginning of a step $j\ge 2$ the current set of representatives is
$\circh_1,\circh_2,...,\circh_{j-1}$, listed in non-decreasing order. In this step we proceed as follows. 
Let $z = \circh_{j-1} + \lambda$. If all $j\in U$ satisfy $r_j > z$, declare failure and return $\FALSE$.
Otherwise, choose $j\in U$ with $r_j \le z$ that minimizes $d_j$ and remove $j$ from $U$.
We now have two cases. If $d_j \le z$, let $\circh_j = d_j$, and otherwise
(that is, when $r_j \le z < d_j$) let $\circh_j = z$. Then increment $j$ and continue.
If the process completes with $U = \emptyset$ (and thus also $j=n$),
return $\TRUE$ and the computed solution $\circH = \smallbraced{\circh_1,\circh_2,...,\circh_n}$.


\medskip

To show correctness of Algorithm~$\AlgViable(\lambda)$,
let $H = \braced{ h_1, h_2, ..., h_n}$ be some solution in increasing order and
with all gaps at most $\lambda$. We show that this solution can be converted into the one computed by our algorithm.
For $j=1$, as we explained earlier, we can assume that $h_1 = d_1$, so  $h_1 = \circh_1$.
Consider the first step when Algorithm~$\AlgViable(\lambda)$ chooses some $\circh_j\neq h_j$.  (If there is no such step, we are done.)
By the choice of $\circh_j$ in the algorithm, we have that $h_j < \circh_j$.  
(Otherwise, either the gap between $h_{j-1}$ and $h_j$ would exceed $\lambda$ or $H$ would not hit $I_j$.)
We can then replace $h_j$ by $\circh_j$ in $H$, without increasing the gap
size to above $\lambda$. This way, we increase the number of steps of
Algorithm~$\AlgViable(\lambda)$ that produce the same representatives as those in $H$.
So repeating this process sufficiently many times eventually converts $H$ into the set $\circH$.

\smallskip

We claim that Algorithm~$\AlgViable(\lambda)$ can be implemented in time $O(n\log n)$. Instead of
$U$, the algorithm maintains a set $U'\subseteq U$ that, when a step $j\ge 2$ starts, consists of indices $i$ for which
$r_i \le \circh_{j-1} + \lambda$ and $I_i$ does not yet have a representative. 
Store $U'$ in a  priority queue with priority values equal to the deadlines.
Then choosing the new interval $I_j$ in the algorithm and removing $j$ from $U'$ takes time $O(\log n)$.
When $j$ is incremented (after adding $\circh_j$ to the solution), the indices of new intervals are inserted
into $U'$ in order of release times (which can be sorted in the pre-processing stage), with each insertion taking time $O(\log n)$.

\medskip

Now, the idea is to use Algorithm~$\AlgViable(\lambda)$ as an oracle in binary search on $\lambda$'s. 
For this to work, we need to be able to efficiently identify a small set of candidate values for the optimal $\lambda$. Let 
\begin{equation*}
	\Lambda \;=\; \braced{ \frac{r_i - d_j}{k} \suchthat k\in \braced{1,2,..., n-1} , i,j\in\braced{1,2,...,n} , r_i > d_j}.
\end{equation*}
Observe that $|\Lambda| = O(n^3)$ and, by our assumption that $\max_i r_i > d_1$, also $\Lambda \neq \emptyset$.

We claim that $\Lambda$ contains the optimal gap length $\starlambda$. The argument is this. 
Consider some hitting set $\starH = \braced{\starh_1,\starh_2,...,\starh_n}$ whose maximum gap is $\starlambda$, 
sorted in non-decreasing order. Choose some maximal (w.r.t. inclusion)
consecutive sub-sequence $\starh_a < \starh_{a+1} < ... < \starh_b$ 
with all gaps equal to $\starlambda$, and suppose that $\starh_a$ is not a deadline. 
Then we can move $\starh_a$ by a little bit to the right without creating a gap
longer than $\starlambda$. Similarly, if $\starh_b$ is not a release time then
we can apply a similar procedure to $\starh_b$ and shift it to the left.
Each such operation reduces the number of gaps of length $\starlambda$. Since $\starlambda$ is optimal,
eventually we must get stuck, meaning that we will find a sub-sequence like the one above with the first
and last indices $a$ and $b$ that satisfy $\starh_a  = d_j$ and $\starh_b  = r_i$, for some $i$ and $j$.
Then we will have $\starlambda = \frac{r_i - d_j}{b-a} \in \Lambda$.

The idea above immediately yields an $O(n^3\log n)$-time algorithm. This algorithm
first computes the set $\Lambda$, sorts it, and then finds the optimal $\lambda$
through binary search in $\Lambda$. Note that the running time is dominated by sorting $\Lambda$. 

\smallskip

We now show that this running time can be improved to $O(n^2\log n)$,
by conducting a more careful search in $\Lambda$ that avoids constructing $\Lambda$ explicitly. 
The basic idea is to use a smaller set $\Delta$ that consists of all values $r_i - d_j$ where $r_i > d_j$.
This set $\Delta$ implicitly represents $\Lambda$, in the sense that it consists of
all numerator values of the fractions in $\Lambda$. More precisely,
each value in $\Lambda$ can be expressed as $x/k$, for some $x\in \Delta$ and $1\le k \le n-1$. 
One can visualize $\Lambda$ by representing such values $x/k$ as points in 2D, 
with the two coordinates representing the values of $x$ and $k$, and point $(x,k)$ representing
$x/k$ (see Figure~\ref{fig: idea behind algorithm minmaxgap}). 
Roughly, the algorithm then finds two consecutive values $v,w$ in $\Delta$ such that
$w/(n-1)$ is viable but $v/(n-1)$ is not. It then finds an index $\kappa$ such that
$v/\kappa$ is viable but $v/(\kappa+1)$ is not. Then the optimum
value of $\lambda$ must be between $v/\kappa$ and $v/(\kappa+1)$. We then
show that there are only $O(n^2)$ such values in $\Lambda$, so by doing a binary
search among these values we can find the optimum $\lambda$ in time $O(n^2\log n)$. 
A detailed algorithm with complete analysis follows.

\begin{figure}[ht]
\centering
\includegraphics[width=3in]{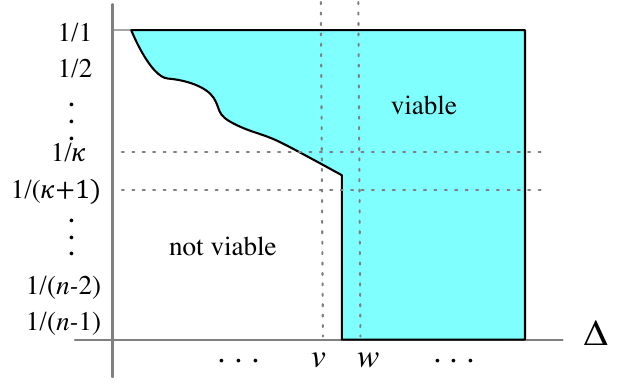} 
\caption{An illustration of the the idea behind Algorithm~$\AlgMinMaxGap$.
Viable fractions in $\Lambda$ are represented by the shaded region.
}
\label{fig: idea behind algorithm minmaxgap}
\end{figure}


\medskip

\myparagraph{Algorithm~$\AlgMinMaxGap$.} 
The algorithm is described below in Pseudocode~\ref{alg:minmaxgap}. In this pseudo-code,
to avoid multi-level nesting, we assume that the algorithm terminates if
the \textbf{return} statement is reached. 


\begin{algorithm}
  \setstretch{1.2}
  \caption{Algorithm~$\AlgMinMaxGap$}
  \label{alg:minmaxgap}
  \begin{algorithmic}[1]
	\State {\myIf} {$\max_i r_i\le d_1$} {\myThen} {\myReturn} $0$
	\State $\Delta \assign \braced{ r_i - d_j \suchthat r_i > d_j, i,j \in \braced{1,2,...,n}}$
	\State sort $\Delta$ in non-decreasing order
	\State {\myIf} {$\AlgViable(\frac{\min(\Delta)}{n-1})$} {\myThen} {\myReturn} $\frac{\min(\Delta)}{n-1}$
	\State $v \assign \max\braced{ x \in \Delta \suchthat \AlgViable(\frac{x}{n-1}) = \FALSE}$
	\State $w\assign \min \braced{ x \in \Delta \suchthat x > v }$
	\State {\myIf} {$\AlgViable(v) = \FALSE$} {\myThen} {\myReturn} $\frac{w}{n-1}$
	\State $\kappa \assign \max\braced{k \in \braced{1,2,...,n-1} \suchthat \AlgViable(\frac{v}{k}) = \TRUE}$
	\State $\Lambda'\assign \braced{ \frac{x}{ \ceiling{ \kappa x/v } }
						\suchthat x\in \Delta \;\textrm{and}\; \frac{v}{\kappa + 1} < \frac{x}{ \ceiling{ \kappa x}/v } \leq \frac{v}{\kappa}
						}\cup\braced{\frac{w}{n-1}}$		
	\State sort $\Lambda'$ in non-decreasing order
	\State \textbf{return} $\min \braced{ \lambda \in \Lambda' \suchthat \AlgViable(\lambda) = \TRUE}$
  \end{algorithmic}
\end{algorithm} 

\bigskip

We now explain the steps in the algorithm and justify correctness and the running time.
First, if $\max_i r_i\le d_1$ then there is a hitting set with all representatives on one point, 
and we return $0$ as the optimum value (Line~1).

Otherwise we have $\max_i r_i > d_1$, that is any hitting set needs at least two points and 
the optimal gap is strictly positive. 
We then compute all positive values $r_i - d_j$, store them in a set $\Delta$ and sort them (Lines~2-3).
This will take time $O(n^2\log n)$.

If $\frac{\min(\Delta)}{n-1}$ is viable (which we check in Line~4), 
then this is the optimal value, since no hitting set can have all
gaps smaller than $\frac{\min(\Delta)}{n-1} = \min(\Lambda)$. 
We can thus now assume that $\frac{\min(\Delta)}{n-1}$ is not viable.

Next, we compute the largest $v\in \Delta$ for which $\frac{v}{n-1}$ is not viable. By the
previous paragraph, such $v$ exists. 
To this end, we can do binary search in the set $\braced{\frac{x}{n-1} \suchthat x \in \Delta}$,
at each step making calls to $\AlgViable()$ to
determine whether the current split value is viable or not.
The binary search will take time $O(n^2\log n)$. 
We also let $w$ to be the next value in $\Delta$ after $v$. (If there is no such value, let $w = \infty$.)

At this point we check whether $v$ is viable. If it is not, it means that for all $x\in\Delta$ with
$x\le v$, all fractions $x/k$, for $k= 1,2,...,n-1$, are not viable as well. 
Then the smallest viable value in $\Lambda$ must be $\frac{w}{n-1}$, so we output $\frac{w}{n-1}$ in Line~7. 
(Note that in this case $w$ must exist, because if $v$ were the largest value in $\Delta$
then $v$ would be viable.)

If $v$ is viable, we compute the largest $\kappa$ for which $v/\kappa$ is viable (Line~8). By the choice of $v$ we have
$\kappa < n-1$. We now also know that the optimal value for $\lambda$ has the form $\frac{x}{k} \in \Lambda$
where $x\in\Delta$, $x\le v$, and
\begin{equation}
	 \frac{v}{\kappa+1} \;<\; \frac{x}{k} \;\le\; \frac{v}{\kappa}. 
	 \label{eqn: min max gap range of x}
\end{equation}
So we only need to search for $\lambda$ among such values.

Next, we define a small set $\Lambda'$ that contains all candidate values from the previous paragraph.
To this end, we claim that for any $x\in \Delta$, if $x\le v$ then there is at most one integer
$k_x\in\braced{1,...,n-1}$ for which condition~(\ref{eqn: min max gap range of x}) holds. 
This follows from simple calculation, as~(\ref{eqn: min max gap range of x}) implies that
\begin{equation*}
			\frac{x}{v}\cdot \kappa \;\le\; k < \frac{x}{v}\cdot \kappa + \frac{x}{v} \;\le\; \frac{x}{v}\cdot \kappa + 1.
\end{equation*}
Thus the only candidate for $k_x$ is $k_x = \ceiling{\frac{x}{v}\cdot \kappa}$.

The above argument gives us that the only candidates for the optimal gap size we need to consider
are all values  $x/k_x$, for $x\in \Delta$ and $x \le v$, plus the value $\frac{w}{n-1}$ that we identified before
as another candidate.
In Lines~9-10 we let $\Lambda'$ be the set of these candidates and we sort them in non-decreasing order.
Finally, we find the smallest viable value in $\Lambda'$. 
As $|\Lambda'| = O(n^2)$, this can be done in time $O(n^2\log n)$ with binary search that calls $\AlgViable()$ for each split value.


\subsection{The Discrete Case}

We now show that Algorithm~$\AlgMinMaxGap$ from the previous section can be adapted
to the discrete case, namely to scheduling unit jobs.

Let $\calJ$ be an instance of unit job scheduling with release times and deadlines.   
As explained in Section~\ref{sec: preliminaries},  we can now assume 
without loss of generality (and in contrast to the continuous case) that
all deadlines are different and sorted in increasing order, $d_1 < d_2 < ... < d_n$.

We treat $\calJ$ as a collection of intervals $I_j = [r_j,d_j]$, $j=1,2,...,n$, and
run Algorithm~$\AlgMinMaxGap$. This will produce a set of
(real-valued) representatives $H = \braced{h_1,h_2,...,h_n}$ for the intervals in $\calJ$. 
(Here $h_j$ denotes the representative of interval $I_j$, so the elements in $H$ may not be in increasing order.)
Let $\lambda$ be the maximum gap between these representatives. Since $\lambda$
is an optimal gap for the continuous variant, $\barlambda = \ceiling{\lambda}-1$
is a lower bound on the optimal gap length for the discrete variant.
(We need to subtract $1$ to account for unit length of jobs.)
It is thus enough to construct a schedule with all gaps of length at most $\barlambda$.

Recall that Algorithm~$\AlgViable(\lambda)$ either assigns jobs to their deadlines
or it spaces consecutive jobs at intervals of $\lambda$ between some deadline and some release time.
As explained before, without loss of generality we can assume that job $1$ is scheduled at $d_1$, and
Algorithm~$\AlgViable(\lambda)$ will in fact produce $h_1 = d_1$. 
If all other $h_i$'s are also deadlines, we are done. Otherwise, the rough idea is
to tentatively assign each job $j$ to $h_j$ (which may not be integral), and then, going from left to right,
gradually shift each job to the first available slot after $h_j$. This does not quite
work, because when several intervals have their representatives in the same slot, this
could force some jobs past their deadlines. So the correct process needs to be more
subtle and allow for some job reordering, as described below.


\medskip
\myparagraph{Procedure~${\AlgAdjust}(\lambda)$.}
We describe how to convert $H$ into a schedule $S$ of $\calJ$. Start by initializing $S_1 = d_1$ and $P=\emptyset$. 
(Set $P$ represents pending jobs that are ``delayed'', namely those
whose representatives' values in $H$ are before or at the current slot.)
Then consider slots $t = d_1+1, d_1 +2 , ...$, one by one. For each such $t$, first add to $P$ all jobs $j$ with $\ceiling{h_j} = t$. 
If $P\neq\emptyset$, choose $j$ to be the job in $P$ with minimum $d_j$, let $S_j = t$, and remove $j$ from $P$.
Then increment $t$ to $t+1$ and continue.
\smallskip


We claim that $S = (S_1,S_2,...,S_n)$ is a feasible schedule. By the way we add
jobs to $P$, if $j\in P$ when we consider slot $t$ then $r_j\le h_j \le t$. Since also
$h_j \le d_j$, each job will be added to $P$ not later than when processing slot $t=d_j$.
Also, the assumption about different deadlines implies (by simple induction) that when we consider a slot $t$ then all jobs in $P$
have deadlines at least $t$; in particular this gives us that no job will miss its deadline. 
Thus $S_j \in [r_j,d_j]$ for all $j\in\calJ$.

Next, we show that the maximum gap size in $S$ is equal to $\barlambda$. Obviously (see above), it
cannot be smaller. To show that it is not larger, consider a tentative assignment $Q = \braced{Q_1,Q_2,...,Q_n}$ of jobs to
slots defined by $Q_j = \ceiling{h_j}$, for all $j\in\calJ$. (This is not a feasible schedule
because it may assign different jobs to the same slot.) We first show that the
maximum gap in this assignment is at most $\barlambda$. Consider two jobs $j$ and $j'$ that are consecutive in $Q$;
that is, $Q_j < Q_{j'}$ and there is no job $\ell$ with $Q_j < Q_\ell < Q_{j'}$.
We can assume that $h_j = \max\braced{ h_\ell \suchthat Q_\ell = Q_j}$ and
$h_{j'} = \min\braced{ h_\ell \suchthat Q_\ell = Q_{j'}}$. Then $j$ and $j'$ are also consecutive in $H$ and
the length of the gap between them is $h_{j'} - h_j \le \lambda$. We then have
\begin{equation*}
Q_{j'} \;=\; \ceiling{h_{j'}} \le \ceiling{ h_j + \lambda} 
		\;\le\; \ceiling{h_j} + \ceiling{\lambda} 
		\;\le\; Q_j + 1 + \barlambda.
\end{equation*}
Thus all gaps in $Q$ are at most $\barlambda$. But all slots of $Q$ are also used by $S$
because, in Procedure~${\AlgAdjust}(\lambda)$, when we consider slot $t\in Q$ set $P$ is not empty.
This implies that the gaps in $S$ are bounded from above by $\barlambda$. We can thus conclude that $S$ is optimal.  

The way we described Procedure~${\AlgAdjust}(\lambda)$, its running time would not be bounded by a function of $n$.
This is easy to fix by skipping all the slots $t$ for which the current set $P$ is empty. Specifically,
we do this: Suppose that when we process a slot $t$ we have $P\neq\emptyset$. If $|P|\ge 2$ then $P$ remains non-empty
after scheduling a job in slot $t$, so in this case we increment $t$ by $1$. Otherwise, we increment it to the
first value $\ceiling{h_j}$ after $t$. This way we will only examine $n$ slots.
With routine data structures, this approach will give us running time $O(n\log n)$.

The discussion above focussed only on computing the optimum gap size. Given this value
and using Algorithm~$\AlgViable()$, one can also compute an actual optimum schedule. Summarizing, we obtain the following theorem.


\begin{theorem}\label{thm: min max gap}
For any instance $\calJ$, Algorithm~$\AlgMinMaxGap$ (adapted for the discrete case, as explained above)
in time $O(n^2\log n)$ computes a schedule of $\calJ$ whose maximum gap value is minimized. 
\end{theorem}



\section{Minimizing Total Flow Time with a Budget for Gaps}
\label{sec: minimizing total flow time budget for gaps}


Unlike in earlier sections, we now consider jobs without deadlines and focus
on the tradeoff between the number of gaps and the delay of jobs. 
Formally, an instance $\calJ$ is given by a collection of $n$ unit length jobs. For each job
$j = 1,2,...,n$ we are given its release time $r_j$. If, in some schedule $S$, job $j$ is executed
at time $S_j$ then $F_j = S_j - r_j$ is called the \emph{flow time of $j$ in $S$}.
We are also given a budget value $\gamma$ for the number of gaps.  The objective is to compute a schedule $S$ for $\calJ$ that minimizes  
the total flow time $F_\total(S) = \sum_j F_j$ among all schedules with at most $\gamma$ gaps.
Figure~\ref{fig: example min total flow} shows an example of an instance and a schedule with two gaps.


\begin{figure}[ht]
\begin{center}
\includegraphics[width=5in]{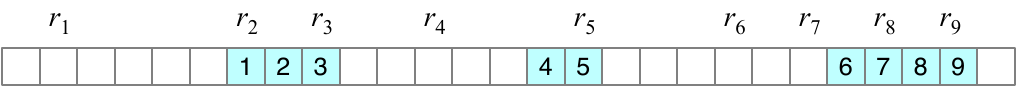}
\caption{An instance and its schedule with two gaps and total flow value 
$F_1 +  ... + F_9 = 5+1+0 + 3 + 0 +3 + 2 + 1 + 0 = 15$.}
\label{fig: example min total flow}
\end{center}
\end{figure}



\paragraph{Continuous case.}
The continuous variant of this problem is equivalent to the \emph{$k$-medians problem} on
a directed line: Given points $r_1,r_2,...,r_n$, find a set $H$ of $k$ points that minimizes the sum 
\begin{equation*}
	\sum_{i=1}^n \;\min_{\begin{subarray}{c}
							h \in H\\ h \ge r_i
							\end{subarray}}
	 				  (h - r_i),        
\end{equation*}
where the $i$th term of the sum represents the distance between $r_i$ and the first point in $H$ after $r_i$.
(Here, the value of $k$ corresponds to $\gamma-1$, the number of blocks in the discrete schedule.)
This is a well-studied problem and it can be solved in time $O(k n)$ if
the points are given in a sorted order~\cite{Woeginger00-monge}. Prior to the work in~\cite{Woeginger00-monge}, 
the undirected case on the line was addressed in~\cite{Hassin_Tamir_location_problems_91,Auletta_etal_resources_98},
and extension to trees have also been studied -- see~\cite{Chrobak_etal_k-median_trees_01}, for example, and references therein. 


\paragraph{Discrete case.}
The discrete case differs from its continuous analogue because
the jobs executed in the same block do not occupy a single point. Nevertheless,
we show that the techniques for computing $k$-medians can be
adapted to minimum-flow scheduling with gaps, resulting in an
algorithm with running time $O(n\log n + \gamma n)$.

Without loss of generality, we assume that all release times are different
and ordered in increasing order, that is $r_1 < r_2 < ... < r_n$. Any instance
can be modified to have this property in time $O(n\log n)$. As explained in
Section~\ref{sec: preliminaries}, this modification changes the flow of all
schedules uniformly, so the optimality is not affected. Sorting the release times is the only
part of the algorithm that requires time $O(n\log n)$; the remaining part will run in time $O(\gamma n)$.

We first give a simple dynamic programming formulation with running time
$O(\gamma n^2)$, and then show how to improve it to $O(\gamma n)$. 
Any schedule with at most $\gamma$ gaps consists of at most $\gamma+1$ blocks.
To reduce the running time, we need to show that these blocks can only
be located at a small number of possible places. For this,
we will need the following lemma, that follows directly from Lemma~\ref{lem: shift} and an exchange argument.


\begin{lemma}\label{lem: max flow opt form}
There is an optimal schedule with the following properties:
(i) all jobs are scheduled in order of their release times,
and 
(ii) the last job of each block is scheduled at its release time.
\end{lemma}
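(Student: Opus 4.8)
The plan is to start from an arbitrary optimal schedule $S$ and apply two exchange arguments, one to enforce (i) and one to enforce (ii), making sure that the second does not destroy the first. For property (i), I would argue that if two jobs $i,j$ with $r_i < r_j$ are scheduled out of order, i.e. $S_j < S_i$, then since both slots are busy and $r_i < r_j \le S_j < S_i$, swapping the two jobs keeps the schedule feasible (job $i$ now sits at the later slot $S_i \ge r_i$, job $j$ at the earlier slot $S_j \ge r_j$ by assumption $r_j\le S_j$) and leaves the multiset of busy slots, hence the number of gaps, unchanged; the total flow $\sum_k F_k$ is also unchanged because swapping two jobs within a fixed set of slots only permutes the completion times. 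Since the set of busy slots is fixed, iterating this bubble-sort-style exchange terminates with all jobs in release-time order and the same cost and same number of gaps. So without loss of generality $S$ already satisfies (i).

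For property (ii), I would take a schedule satisfying (i) and look at any block $B = [u,v]$ whose last job $\ell$ (the one in slot $v$) has $r_\ell < v$. By Lemma~\ref{lem: shift} (which applies because we may assume all release times are distinct, as noted in the ``Instances without deadlines'' paragraph), the whole block $B$ can be shifted left by one slot, rescheduling its jobs in $[u-1,v-1]$. This strictly decreases $\sum_j F_j$ — every job in $B$ has its completion time reduced by $1$, and no job outside $B$ is touched — while the number of blocks, hence gaps, does not increase (a left shift can only merge $B$ with the block to its left, never create a new gap). Since the total flow is a non-negative integer, we cannot decrease it forever, so after finitely many shifts we reach a schedule in which the last job of \emph{every} block is at its release time; and because the flow only went down, this schedule is still optimal. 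The one thing to check is that the shift of Lemma~\ref{lem: shift} preserves property (i): within $B$ the shift is realized by the permutation described in the lemma's proof, which reschedules each $i_b$ at $r_{i_b}$ and $i_q$ at $u-1$, so it may locally reorder jobs inside $B$. To handle this cleanly I would simply re-apply the property-(i) exchange argument after each block shift — it does not move any job across block boundaries (it only permutes jobs within the fixed set of busy slots) and does not change the cost or the gap count, so it restores (i) without spoiling the progress made toward (ii). Alternating the two steps, with the total flow as a strictly decreasing potential controlling the outer loop, yields the desired optimal schedule.

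The main obstacle is precisely this interaction: the block-shift operation that establishes (ii) is not guaranteed to respect the release-time ordering inside the shifted block, so one must argue that re-sorting does no harm and that the process still terminates. Using $\sum_j F_j$ (a non-negative integer that strictly drops at every block shift and is unaffected by re-sorting) as the termination measure resolves this. A secondary point worth stating explicitly is the assumption that release times are distinct; it is needed for Lemma~\ref{lem: shift}, and the excerpt already justifies that this is without loss of generality for the total-flow objective, since the modification changes the flow of all schedules by the same additive constant.
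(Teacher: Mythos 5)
Your proof is correct and takes essentially the same route as the paper, which disposes of the lemma in two sentences: property (i) is declared trivial (your swap argument is the standard justification) and property (ii) is obtained by iteratively applying Lemma~\ref{lem: shift}. The extra care you take — using the total flow as a strictly decreasing integer potential to guarantee termination, and re-establishing (i) after each block shift since the shift permutes jobs within the block — fills in details the paper leaves implicit; only the parenthetical in your step (i) momentarily swaps which job lands in which slot, but the inequality chain $r_i < r_j \le S_j < S_i$ you cite already establishes feasibility of both placements.
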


Based on this lemma, each block consists of consecutive jobs, say $i,i+1,...,j$, with the
last job $j$ scheduled at  time $r_j$. The $h$th job of the block is scheduled
at time $r_j - j+h$. So the contribution of this block to the total flow is 
\begin{align*}
W_{i,j} \;=\; \sum_{h=i}^{j} F_h \;&=\; \sum_{h=i}^{j-1} (r_j -j + h - r_h)  
			\\
			  \;&=\; (j-i)r_j - \binom{j-i+1}{2} - R_{j-1} + R_{i-1},
\end{align*}
where $R_b = \sum_{a=1}^b r_a$, for each job $b$.


\paragraph{A simple $O(\gamma n^2)$-time algorithm.}
For each $j=0,1,...,n$, define $\calJ_j$ to be the sub-instance of $\calJ$ consisting of jobs $1,2,...,j$. 
Let $\TotalFlow_{j,g}$ denote the minimum total flow of a schedule for $\calJ_j$ with at most $g$ gaps, where $g\le\gamma$. 
We initialize $\TotalFlow_{0,g} = 0$ for all $g = 0,1,...,\gamma$ and $\TotalFlow_{j,0} = W_{1,j}$ for $j =1,2,...,n$.
Then, for $j = 1,2,...,n$ and $g = 1,2,...,\gamma$, we compute
\begin{equation*}
\TotalFlow_{j,g} \;=\; \min_{1\le i \le j}\braced{ \TotalFlow_{i-1,g-1} + W_{i,j} }.
\end{equation*}
The algorithm returns $\TotalFlow_{n,\gamma}$ as the optimum value for the whole instance $\calJ$.

To justify correctness, we need to explain why the above recurrence holds. Consider a schedule that realizes $\TotalFlow_{j,g}$.
From Lemma~\ref{lem: max flow opt form}, since we are minimizing the total flow, we can
assume that job $j$ is scheduled at $r_j$. Let $i$ be the first job of the last block.
As we calculated earlier, the contribution of this block to the total
flow is $W_{i,j}$. The schedule for the remaining jobs, $1,2,...,i-1$, has at most $g-1$ gaps
and must have optimum total flow time, so (inductively) its total flow time is equal $\TotalFlow_{i-1,g-1}$.

We now consider the running time. All values $W_{i,j}$ can be precomputed in time $O(n^2)$.
We have $\gamma+1$ choices for $g$ and $n+1$ choices for  $j$, so there
are $O(\gamma n)$ values $\TotalFlow_{j,g}$ to compute. Computing each value takes time $O(n)$,
for the total running time $O(\gamma n^2)$.


\paragraph{An $O(\gamma n)$-time algorithm.}
To improve the running time to $O(\gamma n)$, we show that the values $W_{i,j}$
satisfy the Monge property\footnote{For upper triangular matrices this property is often referred to as the quadrangle inequality.
This distinction is only cosmetic, as we can also think of $[W_{i,j}]$ as a full square matrix by filling the lower triangle of the matrix with $\infty$ values.}
(see, for example~\cite{Woeginger00-monge,Burkard_etal_monge_96,Bein_etal_quadrangle_inequality_09}).


\begin{lemma}\label{lem: min total flow quadrangle}
For all $1 \le i \le i' \le j \le j' \le n$, we have
\begin{equation*}
	W_{i,j} + W_{i',j'} \;\le\; W_{i,j'} + W_{i',j}.
\end{equation*}
\end{lemma}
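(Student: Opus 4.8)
The plan is to verify the quadrangle inequality directly from the closed-form expression for $W_{i,j}$ derived just above. Recall that
\[
	W_{i,j} = (j-i)r_j - \binom{j-i+1}{2} - R_{j-1} + R_{i-1}.
\]
First I would substitute this formula into all four terms $W_{i,j}$, $W_{i+1,j+1}$, $W_{i,j+1}$, $W_{i+1,j}$ and collect everything on one side, aiming to show $W_{i,j+1} + W_{i+1,j} - W_{i,j} - W_{i+1,j+1} \ge 0$. The key observation is that the $R$-terms (the prefix sums of release times) cancel completely: each of $R_{i-1}$, $R_i$, $R_{j-1}$, $R_j$ appears in exactly two of the four terms, once with a plus sign and once with a minus sign. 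Likewise the binomial terms $\binom{j-i+1}{2}$ depend only on the block length $j-i$; the lengths of the blocks $[i,j+1]$ and $[i+1,j]$ are $j-i+1$ and $j-i-1$, while $[i,j]$ and $[i+1,j+1]$ both have length $j-i$, so the binomial contribution reduces to $-\binom{j-i+2}{2} - \binom{j-i}{2} + 2\binom{j-i+1}{2}$, which is a constant (independent of the $r$'s) and an easy computation shows it equals $-1$.

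What remains after these cancellations is the genuinely interesting part, coming from the $(j-i)r_j$ terms: $(j-i-1)r_{j+1} + (j-i-1)r_j$ from the ``cross'' terms minus $(j-i)r_j + (j-i-1)r_{j+1}$ from the ``straight'' terms. Simplifying, the linear-in-$r$ contribution collapses to $-r_j$, and combined with the binomial constant $-1$ the inequality to prove becomes equivalent to something like $r_{j+1} - r_j \ge 1$ (or $r_{j+1} > r_j$ after accounting for integrality), after I also pick up a cross term $r_{j+1}$ that I omitted from the rough bookkeeping above — I will recount carefully in the writeup. Since we assumed at the start of the section that release times are distinct integers ordered increasingly, $r_{j+1} \ge r_j + 1$, which is exactly what is needed.

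I expect the main obstacle to be purely clerical: keeping the signs and the off-by-one indices on the $R_b$ and binomial-coefficient terms straight, since three different block lengths are involved and each $R_b$ must be tracked through four substitutions. The cleanest way to organize this is to define $\delta_{i,j} = W_{i,j+1} + W_{i+1,j} - W_{i,j} - W_{i+1,j+1}$, expand each of the four $W$'s once, and group the result into three buckets — the $R$-bucket, the binomial bucket, and the $r_j, r_{j+1}$ bucket — showing the first is zero, evaluating the second as a constant, and reducing the third to a multiple of $r_{j+1} - r_j$. Then invoke the standing assumption $r_{j+1} > r_j$ (integrality gives $r_{j+1} \ge r_j + 1$) to conclude $\delta_{i,j} \ge 0$, which is the claimed quadrangle inequality. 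No structural insight about schedules is needed here; the content of the lemma is entirely in this algebraic identity, and once it is in hand one can invoke the standard SMAWK/Monge machinery to get the $O(\gamma n)$ running time claimed for Algorithm~$\AlgMinTotFlow$.
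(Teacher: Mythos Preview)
Your proposal is correct and is essentially the same direct algebraic verification the paper gives: both expand the closed form for $W_{i,j}$, observe that the $R$-terms cancel, and reduce the inequality to $r_{j+1}-r_j\ge 1$, which holds since release times are assumed to be distinct integers. The only cosmetic difference is organization---the paper computes the two column differences $W_{i,j}-W_{i+1,j}$ and $W_{i+1,j+1}-W_{i,j+1}$ separately and adds them, while you group by term type---and your rough $r$-term bookkeeping (which you already flag) cleans up to exactly $r_{j+1}-r_j-1\ge 0$.
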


\begin{proof}
It is well known (see~\cite{Burkard_etal_monge_96,Bein_etal_quadrangle_inequality_09}, for example), and easy to prove,
that it is sufficient to prove the inequality in the lemma for $i' = i+1$ and $j' = j+1$, that is	
\begin{equation}
	W_{i,j} + W_{i+1,j+1} \;\le\; W_{i,j+1} + W_{i+1,j}.
		\label{eqn: monge simplified}
\end{equation}
To show~(\ref{eqn: monge simplified}), we compute $W_{i,j} - W_{i+1,j}$ and $W_{i+1,j+1} - W_{i,j+1}$ separately:
\begin{align*}
W_{i,j} - W_{i+1,j} \;&=\; \brackd{ (j-i)r_j - \binom{j-i+1}{2} - R_{j-1} + R_{i-1} }
				\\
					&\quad\quad\quad
						- \brackd{ (j-i-1)r_j - \binom{j-i}{2} - R_{j-1} + R_i }
					\\
					\;&=\; r_j - j + i - r_i,
\end{align*}
and
\begin{align*}
W_{i+1,j+1} - W_{i,j+1} \;&=\; \brackd{ (j-i)r_{j+1} - \binom{j-i+1}{2} - R_{j} + R_{i} }
					\\
					&\quad\quad\quad
						- \brackd{ (j+1-i)r_{j+1} - \binom{j-i+2}{2} - R_{j} + R_{i-1} }
					\\
					\;&=\; - r_{j+1} +j+1-i + r_i.
\end{align*}
Adding these equations, we get
\begin{equation*}
	W_{i,j} + W_{i+1,j+1} - W_{i,j+1} - W_{i+1,j} 
				\;=\; r_j - r_{j+1} + 1 \;\le\; 0,
\end{equation*}
because $r_j < r_{j+1}$, due to our assumption that all release times
are different. This completes the proof of~(\ref{eqn: monge simplified}) and the lemma.
\qed
\end{proof}


\medskip

\myparagraph{Algorithm~$\AlgMinTotFlow$}.
With Lemma~\ref{lem: min total flow quadrangle}, the improved algorithm follows the standard
method of speeding-up dynamic programming by leveraging the Monge property, and is essentially the same as in~\cite{Woeginger00-monge}.
We briefly outline it here for the sake of readers unfamiliar with this approach.
First, we sort the jobs in order of release times. This will cost time $O(n\log n)$.
Unlike in the $O(\gamma n^2)$-time algorithm above, now we will \emph{not} pre-compute all values $W_{i,j}$, as this would cost time $O(n^2)$.
Instead, in time $O(n)$ we precompute only all values $R_b = \sum_{a=1}^b r_a$, for $b = 1,2,...,n$. 
With these values precomputed, we can compute each value $W_{i,j}$ in time $O(1)$ whenever it's needed.
The algorithm then loops on $g = 1,2,...,\gamma$, and for any given $g$ it computes
all $n$ values $\TotalFlow_{j,g}$, for $j = 1,2,...,n$. 
To this end, consider iteration $g$, when the values $\TotalFlow_{j,g-1}$ are already computed for all $j$.
Define an auxiliary function $V_{i,j} = \TotalFlow_{i-1,g-1} + W_{i,j}$. We think of $[V_{i,j}]$ as an implicit matrix
whose values can be each computed in time $O(1)$, when needed. Further, using Lemma~\ref{lem: min total flow quadrangle} it is easy to
show that this matrix $[V_{i,j}]$ also satisfies the Monge property. (The extra $\TotalFlow$-terms in the Monge property for $[V_{i,j}]$ cancel
out, reducing the inequality to Lemma~\ref{lem: min total flow quadrangle}.)
By exploiting this property, in iteration $g$ all minima
$\TotalFlow_{j,g} \;=\; \min_{1\le i \le j} V_{i,j}$, for $j=1,2,...,n$, can be computed in time $O(n)$
using the classical algorithm from~\cite{Aggarwal_etal_matrix_searching_87}.
With the whole matrix $[\TotalFlow_{j,g}]$ computed, the algorithm returns $\TotalFlow_{n,\gamma}$.
The overall running time is $O(n\log n + \gamma n)$.


\begin{theorem}\label{thm: min total flow}
For any instance $\calJ$, Algorithm~$\AlgMinTotFlow$ (as outlined above)
in time $O(n\log n + \gamma n)$ computes a schedule of $\calJ$ that has minimum
total flow among all schedules with at most $\gamma$ gaps.
\end{theorem}



\section{Minimizing Number of Gaps with a Bound on Total Flow}
\label{sec: minimizing gaps with bound on total flow}
%
%



An alternative way to formulate the tradeoff in the previous section would be to find a schedule that
minimizes the number of gaps, given a budget $f$ for the total flow $F_\total$.
This can be reduced to the previous problem by finding the smallest $g$
for which there is a schedule with at most $g$ gaps and total flow at most $f$.
Our solution is the same for both the continuous and discrete versions, so we focus only on the discrete variant. 

Using the notation from the previous section, $\TotalFlow_{n,g}$ represents the minimum total flow of a schedule with at most $g$ gaps.
Then $\TotalFlow_{n,g} = W_{1,n}$, $\TotalFlow_{n,n-1} = 0$, and $\TotalFlow_{n,g}$ is non-increasing as $g$ increases.  
Algorithm~$\AlgMinTotFlow$ computes the values of matrix $[\TotalFlow_{j,g}]$ column by column, that is in order of
increasing $g$. We can then adapt this algorithm to stop as soon as it finds $g$ for which $\TotalFlow_{n,g}\le f$.
Then the minimum number of gaps is $g^\ast = g$. This gives us the following result.


\begin{theorem}\label{thm: min gaps with bound on total flow}
For any instance $\calJ$ and a flow bound $f$, the above modification of Algorithm~$\AlgMinTotFlow$
in time $O(n \log n + g^\ast n)$ computes a schedule of $\calJ$ that minimizes
the number of gaps among all schedules with total flow at most $f$. (Here, $g^\ast\le n$ denotes the number of gaps in the optimum solution.)
\end{theorem}



\section{Minimizing Number of Gaps with a Bound on Maximum Flow}
\label{sec: minimizing gaps with bound on max flow}
%
%



Now, instead of total flow time, we consider the objective function equal 
to the \emph{maximum} flow time, $F_\malymax = \max_j (S_j - r_j)$, that we wish to minimize.
At the same time, we would also like to minimize the number of gaps.
This leads to two optimization problems, by placing a bound on one value and minimizing the other.
In this section we consider  the problem of minimizing the number of gaps when an upper bound on the flow of each
job is given. For this problem,  we give $O(n\log n)$-time algorithm.

Formally, we are given an instance $\calJ$ consisting of $n$ unit jobs with release times and a 
threshold value $f$. The objective is to compute a schedule of $\calJ$ that minimizes
the number of gaps among all schedules with maximum flow time bounded by $f$. 
(If there is no schedule with maximum flow at most $f$, the algorithm should report failure.)
As before, without loss of generality, we can
assume that the jobs are sorted according to their release times, that is
$r_1 \le r_2 \le ... \le r_n$. (As we remarked earlier in Section~\ref{sec: preliminaries}, we cannot now assume
that all jobs have different release times. In fact, the presence of jobs with equal
release times causes the algorithm for the discrete case to be more involved than for the continuous case.)


\paragraph{Continuous case.}
We start by giving an $O(n\log n)$-time algorithm for the continuous case.
Here we are given a collection of $n$ real numbers $r_1,r_2,...,r_n$,
and a number $f$, and we want to compute a set $H$ of minimum cardinality 
such that $\min\braced{h\in H \suchthat h \ge r_i} \le r_i +f$ for all $i=1,2,...,n$.

We show that this can be solved in time $O(n)$, assuming the release times are sorted, $r_1 \le r_2 \le ... \le r_n$.
Indeed, this is very simple, using  a greedy algorithm that computes $H$
in a single pass through the input.  Specifically, initialize $H =  \braced{r_1+f}$.
Then in each step choose $i$ to be smallest index for which $r_i > \max(H)$ and
add $r_i+f$ to $H$. A routine inductive argument shows that the computed set $H$ has
indeed minimum cardinality. The algorithm is essentially a
linear scan through the sorted sequence of release times, so its running time is $O(n)$. With sorting, the time will be $O(n\log n)$.


\paragraph{Discrete case.}
Next, we want to show that we can achieve the same running time for the
discrete variant, where we schedule unit jobs. The greedy single-pass algorithm above does not directly apply because each
point in $H$ corresponds now to a (possibly long) block of jobs, affecting the maximum flow value.

The basic idea of our approach is to think about the problem as the gap minimization
problem with ``virtual'' deadlines, where the virtual deadline of each job $j$ is defined by $ r_j + f$.
We now need to solve the gap minimization problem for jobs with deadlines which, as discussed in the introduction,
can be solved in time $O(n^4)$~\cite{Baptiste:min-idle-periods,Baptiste-etal-07,Baptiste_etal_polynomial_12}.
However, we can do better than this. The instance with deadlines we created satisfies the
``agreeable deadline'' property, which means that the ordering of the deadlines
is the same as the ordering of release times. For such instances a minimum-gap schedule can be computed in time $O(n\log n)$
(see~\cite{Angel_etal_agreeable_energy_12}, for example).
This will thus give us an $O(n\log n)$-time algorithm for gap minimization with a bound on maximum flow.

\smallskip

In the remainder of this section we present an alternative $O(n\log n)$-time
algorithm for this problem, which has the advantage that its running time is
actually $O(n)$ if the jobs are already sorted in non-decreasing order of release times. Besides being of its own interest,
such an algorithm will be useful in the next section.

Let $\calJ$ be the given instance of $n$ unit jobs numbered $1,2,...,n$,
whose release times are ordered in non-decreasing order: $r_1 \le r_2 \le ... \le r_n$. In this ordering the ties are broken arbitrarily.
It is easy to see (by a simple exchange argument) that there is an optimal schedule 
in which all jobs are scheduled in order $1,2,...,n$, and we will only consider such schedules from now on.


\medskip

\myparagraph{Algorithm~{\AlgMinGapMaxFlow}.}
The algorithm has two stages. In the first stage we produce a tentative schedule $Q$ 
by greedily scheduling the jobs from left to right:  Start with $t = r_1$. We have $n$ steps, and
in each step we schedule one job. When step $j$ starts,
jobs $1,2,...,j-1$ will already be scheduled before the current slot $t$. We then schedule $j$ as follows:
If $r_j\le t$, schedule $j$ in slot $t$ and let $t = t+1$; otherwise schedule $j$ in slot $r_j$ and let $t = r_j+1$. 
After we schedule all jobs, we check their flow values. If there is a job in $Q$ with flow
larger than $f$, declare failure (meaning that there is no schedule with maximum flow at most $f$) and stop. 
Otherwise, continue to the next stage.

We now explain the second stage, in which we convert $Q$ into the final schedule $S$. This is accomplished by shifting some jobs to
the right to reduce the number of gaps, without exceeding the maximum flow restriction. The computation consists
of phases numbered $0,1,...,g$, where $g$ is the final number of gaps. In each phase we construct one block of $S$. 
Let $Q^0 = Q$. In general, let $Q^l$ denote the schedule at the start of phase $l$. With $Q^l$ we associate a time slot $v_{l-1}$
which represents the last time slot processed in phases $0,1,...,l-1$. We (artificially) initialize $v_{-1} = r_1-1$. 
The intuition is that, in $Q^l$, the jobs from $Q$ in the time segment $[r_1,v_{l-1}]$ will be rearranged into $l$ blocks, 
while in the time segment $[v_{l-1}+1,\infty]$ the tentative schedule $Q$ will be still unchanged.
Formally, $Q^l$ will satisfy the following invariants (see Figure~\ref{fig: alg mingapmaxflow invariant}):
\begin{description}
\item{(i)} All jobs are scheduled in order of their release times.
\item{(ii)} The jobs scheduled in interval $[r_1,v_{l-1}]$ are exactly the jobs 
			in $\calJ$ released in time segment $[r_1,v_{l-1}]$ and scheduled by $Q$ in this time segment.
\item{(iii)} The jobs in $[r_1,v_{l-1}]$ are scheduled in $l$ blocks
			$B_0, B_1, ..., B_{l-1}$, listed from left to right, where $B_h = [u_h,v_h]$ for $h = 0,1,...,l-1$. 
			In each block $B_h$, at least one job has flow time equal $f$ and all other jobs have flow time at most $f$.
\item{(iv)} In interval $[v_{l-1}+1,\infty]$ schedule $Q^l$ is identical to $Q$.
\item{(v)} Slot $v_{l-1}+1$ is idle in $Q^l$ and is not a release time of any job.
\end{description}
Let $i$ be the first job in $Q^l$ after $v_{l-1}$, scheduled at slot $Q^l_i$. From properties~(iv) and~(v), and from the way
the first stage works, we have $Q^l_i = r_i \ge v_{l-1}+2$.
We start with block $B_l$ initialized as $B_{l} = [u_{l},v_{l}] = [Q^l_i,Q^l_i]$; that is, it consists only of job $i$. 
With $B_{l}$ we associate its maximum flow time value $F(B_{l})$ that is initialized to $Q^l_i - r_i = 0$.
Then, in each step of this phase we will
either shift $B_{l}$ to the right or add another job to it. Specifically, we
do this. If there is a job $j$ scheduled by $Q^{l}$ in time slot $v_{l}+1$, we add this job to $B_{l}$ without changing its schedule,
which means that we increment $v_{l}$, and we update the maximum flow value,
$F(B_{l}) \assign \max\smallbraced{F(B_{l}) , Q^l_j - r_j}$.
Suppose now that there is no job scheduled in slot $v_{l}+1$. If $F(B_{l}) < f$
then we shift $B_{l}$ by $1$ to the right, that is we increment each $u_{l}$, $v_{l}$, and
$F(B_{l})$ by $1$. Otherwise, we end the phase. If the last job in $B_{l}$ is not $n$, we
go to phase $l+1$. If this job is $n$, we are done (and $g = l$), and we return $S = Q^l$.


\begin{figure}[ht]
\begin{center}
\includegraphics[width=6.2in]{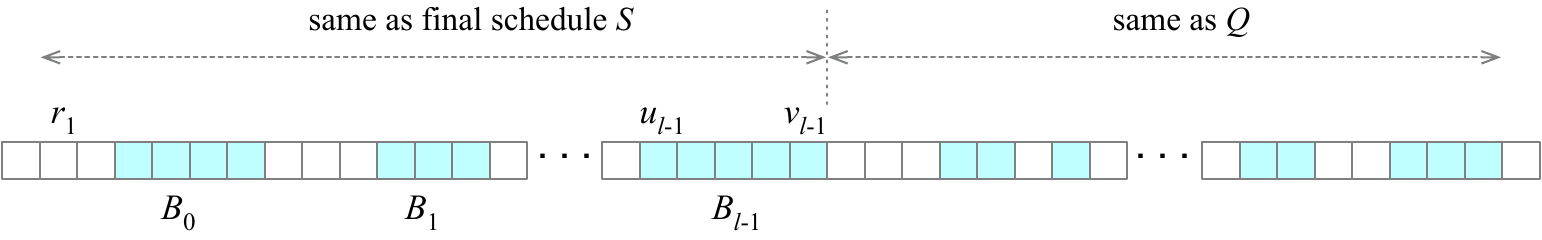}
\caption{Illustration of the invariant for Algorithm~{\AlgMinGapMaxFlow}, showing the structure of schedule $Q^l$ when phase $l$ is about to start.}
\label{fig: alg mingapmaxflow invariant}
\end{center}
\end{figure}


\medskip 
We now argue that Algorithm~{\AlgMinGapMaxFlow} is correct. To this end, we start with the
observation that the tentative schedule $Q$ dominates each other schedule $S'$, in the sense that $Q_j \le S'_j$ for all 
jobs $j$ in $\calJ$. (As explained earlier, we consider only schedules, including $S'$,
where jobs are scheduled in order $1,2,...,n$.) This follows directly from how 
$Q$ is constructed in the first stage, namely that in $Q$ each job $j$
is scheduled at the first idle slot which is not before $r_j$ and is
after the slots of jobs $1,2,...,j-1$. This observation implies that schedule $Q$
minimizes the maximum flow. Therefore if Algorithm~{\AlgMinGapMaxFlow} proceeds to the
second stage, we know that there is a schedule with maximum flow at most $f$.
Further, any such schedule can be obtained from $Q$ by shifting some jobs to the right, preserving the order of jobs.

Consider now $S$. That the flow of all jobs in $S$ is at most $f$ should also
be clear, as  when shifting jobs in the second phase of Algorithm~{\AlgMinGapMaxFlow} 
we explicitly ensure that this condition is preserved. 
Finally, we argue that $S$ minimizes the number of gaps among all schedules with
maximum flow at most $f$. To show this, it is enough to prove that for any two
consecutive blocks $B_{l}$ and $B_{l+1}$ there are two jobs, one in each, that
must be separated by a gap in any schedule with maximum flow at most $f$. To this end, let $p$ be
a job in $B_{l}$ whose flow in $S$ is exactly $f$, that is $S_p = r_p+f$. (Such $p$ exists, by property~(iii).)
Let $i$ be the first job in $B_{l+1}$. Thus job $i-1$ is the last job in $B_{l}$ and
it is scheduled at slot $S_{i-1} = v_{l} = S_{p}+  i-1 -p = r_p + f + i-1-p$. 
By property~(v), we have  $r_i \ge v_{l}+2$. Thus
\begin{equation*} 
	r_i - (r_p + f) \;\ge\;   v_{l}+2  -r_p - f  \;=\;  i - p  +1.
\end{equation*} 
The latest slot when we can schedule job $p$ is $r_p+f$ and the earliest slot when we can schedule
job $i$ is $r_i$. The time segment $[r_p+f,r_i]$ has $r_i - (r_p+f)+1$ slots, which is strictly
greater (by the above inequality) than the number of jobs $i-p+1$ between $p$ and $i$ (inclusive)
that we need to schedule in this segment, so there has to be an idle slot between $p$ and $i$, as claimed.

\smallskip
We now claim that Algorithm~{\AlgMinGapMaxFlow} can be implemented in time $O(n)$. 
The first stage clearly runs in time $O(n)$, so we focus on the second stage.
In our implementation, for each block $B_l$
we have a list of jobs scheduled in this block, in order of release times.
(However, for the jobs in $B_l$ we do \emph{not} keep track of which slot they are scheduled in
during the second stage.
Updating these values after each shift would be too time consuming.) 
Instead of repeatedly shifting $B_l$, we compute the smallest shift value $\delta$ such that,
after shifting $B_l$ by $\delta$, either 
$F(B_l)$ will become equal $f$ or there will be a job scheduled by $Q^l$ right after $B_l$. 
Specifically, if $i$ is the first job scheduled after $B_l$,
at time $Q^l_i$, then the shift value is $\delta = \min(Q^l_i - v_l-1, f-F(B_l))$.
Thus all three values of $u_l$, $v_l$ and $F(B_l)$ are increased by $\delta$. 
After this computation is complete, the
slot $S_j$ of each job $j$ can be computed by adding its index within its block $B_l$ to this
block's start time $u_l$. With these modifications, the running time of the second stage will be $O(n)$. 

\smallskip
Summarizing this section, we obtain the following theorem:         


\begin{theorem}\label{thm: min gaps with bound on max flow}
For any instance $\calJ$ and a flow bound $f$, Algorithm~{\AlgMinGapMaxFlow}
in time $O(n\log n)$ computes a schedule of $\calJ$ that minimizes
the number of gaps among all schedules with maximum flow at most $f$.
If the release times are already sorted, the running time of Algorithm~{\AlgMinGapMaxFlow}
is $O(n)$.
\end{theorem}



\section{Minimizing Maximum Flow with a Budget for Gaps}
\label{sec: min max flow bound on gaps}
%
%



We now consider an alternative variant of the tradeoff between minimizing the maximum flow and the
number of gaps. This time, for a given collection of $n$ unit jobs with release times 
$r_1,r_2,...,r_n$ and a budget $\gamma$, we want to compute a schedule
that minimizes the maximum flow time value $F_\malymax$ and has at most $\gamma$ gaps.
(Recall that $F_\malymax = \max_j F_j$, where $F_j$ is the flow time of job $j$, that is $F_j = S_j - r_j$.)
We can again assume that $r_1 \le r_2 \le ... \le r_n$ and restrict our attention to schedules where jobs are scheduled in order $1,2,...,n$.


\paragraph{Continuous case.}
In the continuous case, $r_1,r_2,...,r_n$ are points on the real line, and we want to compute a set $H$ of at most $k$ points that 
minimizes $F_\malymax(H) = \max_i \min_{x\in H,x\ge r_i} (x-r_i)$.  This is a special case of the $k$-center problem, when the underlying space
is the directed line, which can be solved in time $O(n\log^\ast n)$ if the points are already sorted~\cite{Chrobak_etal_recovery_points_91}.
(The undirected version of this problem has been extensively studied since early 1980's, even for the more general cases of trees~\cite{Chen_etal_k-center_15,Megiddo_Tamir_p-center_83,Frederickson_supply_centers_91,Frederickson_tree_partitioning_91,Frederickson_Zhou_parametric_search_17},
culminating in an $O(n)$-time algorithm~\cite{Frederickson_supply_centers_91}.)
As we do not assume the inputs to be sorted, a simpler $O(n\log n)$-time algorithm that we outline below will be sufficient for our purpose. 
The ingredients for this algorithm are present in various forms in the above cited work on the $k$-center problem,
but we include it here for the sake of completeness, and as a stepping stone to our algorithm for the discrete case.

Similar to our algorithm in Section~\ref{sec: minimizing gaps with bound on total flow},
the high-level idea is based on \emph{parametric search} (see~\cite{Frederickson_tree_partitioning_91,Frederickson_supply_centers_91,Frederickson_Zhou_parametric_search_17},
for example).  It involves binary search for the optimal value
$\starf$ of $F_\malymax(H)$, where at each step of the binary search we use the algorithm from the previous section as an oracle.

For binary search, however, we need a small set of candidate values for $\starf$.
If $H$ is an optimal solution, then, without loss of generality, we can
assume that $H$ contains only release times, since any other point in $H$
can be shifted left until it reaches a release time. Thus we only need
to consider the set $\Phi$ of all values of the form $r_j - r_i$ for $j > i$.
(We will tacitly assume that $F_\malymax(H) > 0$, because verifying whether $F_\malymax(H) = 0$ is easy: just
check if the number of different $r_i$'s is at most $k$.)
Since $|\Phi| = O(n^2)$ and we need to sort $\Phi$ before doing binary search, we would obtain an $O(n^2\log n)$-time algorithm. 

Fortunately, we do not need to construct $\Phi$ explicitly.  Observe that the elements of $\Phi$ can be thought of as
forming an implicit $X+Y$ matrix with sorted rows and columns, where $X$ is the vector of release times and $Y = -X$. 
We can thus use the $O(n)$-time selection algorithm for $X+Y$ 
matrices~\cite{Frederickson_Johnson_x_plus_y_matrices_82,MirzaianA85-selection} to speed up computation. Specifically, at each step
we will have two indices $p, q$, with $1\le p \le q \le n(n-1)/2$, such that the optimal value of $\starf$ is between the
$p$th and $q$th smallest values in $\Phi$, inclusive. If $p=q$, we are done, so
assume that $p < q$. We let $l = \floor{(p+q)/2}$ and
we use the algorithm from~\cite{Frederickson_Johnson_x_plus_y_matrices_82,MirzaianA85-selection} to find the $l$th smallest element in $\Phi$, say $f$. 
We now determine whether $\starf \le f$ by applying the $O(n)$ algorithm from the previous section to answer the query
``is there a set $H$ with $|H|\le k$ and $F_\malymax(H) \le f$?''. If the answer is ``yes'', we let $q=l$, otherwise we let $p = l+1$.
This will give us an algorithm with running time $O(n\log n)$.


\paragraph{Discrete case.}
We now show that we can solve the scheduling variant in time $O(n\log n)$ as well.
The solution is similar to the one for the continuous case, with two modifications.
The first modification concerns the set $\Phi$ of candidate values for the maximum flow. 
We show that $\Phi$ can be still expressed as an $X+Y$ set, for some sets $X$ and $Y$
of small cardinality. The second modification involves using Algorithm~{\AlgMinGapMaxFlow} to answer
decision queries in the binary search, instead of the algorithm for the continuous model.

Without loss of generality, we can restrict our attention to schedules $S$ that have the following structure:
\begin{description}
\item{(i)} Jobs in $S$ appear in order $1,2,...,n$ from left to right. (This assumption was
 already justified earlier).
\item{(ii)} Any block in $S$ contains a job scheduled at its release time. 
 	(Otherwise we can shift this block to the left.)
\item{(iii)} If a job $i$ is scheduled by $S$ in some block $B$, then $r_i$ is either in $B$ or in the gap preceding $B$.  
	(Otherwise, by the ordering of release times and (i), we can assume that $i$ is the first job in $B$.
	We could then move $i$ to the end of the previous block, and repeat this process.)
\item{(iv)}  Any two jobs released at the same time are scheduled in the same block.
	(This follows from (iii).)
\end{description}
To apply search in $X+Y$ matrices, we would like to restrict $X$ and $Y$
to have size $O(n)$. Assumption~(ii) gives us immediately that there is an optimal schedule
where each job is scheduled in a slot in $R+[-n,n]$ (see Section~\ref{sec: maximizing throughput with budget for gaps}), 
but this set has quadratic size, so it's too large for our purpose. 

To construct smaller sets $X$, $Y$, we reason as follows. Consider some optimal schedule $S$.
Choose $i$ to be a job with maximum flow time in $S$, and suppose that $i$ is scheduled
by $S$ in some block $B$. By~(ii), $B$ has a job $j$ scheduled at time $r_j$.
Then the flow time of $i$ can be written as
\begin{align}
 		F_i \;=\; S_i - r_i \;&=\; (r_j + i-j) - r_i 
					\nonumber
						\\
							&=\; (r_j - j) - (r_i - i).  
					\label{eqn: 10 flow time form}
\end{align}
This equation holds no matter whether $j\le i$ or $j > i$.
Now, take $X$ to be the set of all values $r_j-j$ for $j=1,2,....,n$ and
$Y = -X$. We can sort $X$ and $Y$ in time $O(n\log n)$. By~(\ref{eqn: 10 flow time form}),  
we only need to search for the optimal flow value in $\Phi = X+Y$.
Analogously to the continuous case, we perform binary search in $\Phi$, using
the $O(n)$-time algorithm from~\cite{Frederickson_Johnson_x_plus_y_matrices_82,MirzaianA85-selection}
for selection in $X+Y$ matrices and
Algorithm~{\AlgMinGapMaxFlow} as the decision oracle at each step, and since the release times
can be pre-sorted, each invocation of this oracle will take time $O(n)$.
Thus the running time will be $O(n\log n)$. 

The complete algorithm in pseudo-code is given below. In the algorithm we assume that $n\ge 3$ and
$0 \le \gamma \le n-2$, as for $\gamma \ge n-2$ we have $F_\malymax = 0$.
In this pseudo-code, $\AlgMatrixSelect(X,Y, l)$ is a call to an $O(n)$-time
algorithm in~\cite{Frederickson_Johnson_x_plus_y_matrices_82,MirzaianA85-selection} that
finds the $l$th smallest value in the (implicit) matrix $X+Y$.


\begin{algorithm}[H]
  \setstretch{1.2}
  \caption{Algorithm~$\AlgMinMaxflowGap$}
  \label{alg:minmaxflowgap}
  \begin{algorithmic}[1]
	\State {Sort release times so that $r_1 \le r_2 \le ... \le r_n$}
	\State {$X\assign \braced{r_j - j \suchthat j \in\braced{1,2,...,n}}$}
	\State {$Y\assign - X$}
	\State {$p\assign 1$ and $q\assign n$}
	\While{$p< q$}
		\State {$l \assign \floor{(p+q)/2}$}
		\State {$f \assign \AlgMatrixSelect(X, Y, l)$}
		\If{$\AlgMinGapMaxFlow(f) \le \gamma$}
			\State {$q \assign l$}
	    \Else 
			\State {$p\assign l$}
	    \EndIf
    \EndWhile
	\State{\textbf{return} $\AlgMatrixSelect(X, Y, p)$}
  \end{algorithmic}
\end{algorithm}

\smallskip
Summarizing this section, we obtain the following theorem:         


\begin{theorem}\label{thm: min max flow with gap budget}
For any instance $\calJ$ and a gap budget $\gamma$, Algorithm~$\AlgMinMaxflowGap$
in time $O(n\log n)$ computes a schedule of $\calJ$ that minimizes the maximum
flow value among all schedules with at most $\gamma$ gaps.
\end{theorem}



\section{Final Comments}
\label{sec: final comments}
%
%



We studied several scheduling problems for unit-length jobs where the gap structure of the 
computed schedule is taken into consideration. For all problems we considered we 
provided polynomial-time algorithms, with running times ranging from $O(n\log n)$ to $O(n^7)$. 

Many open problems remain. The most intriguing question
is whether the running time for minimizing the number of gaps for unit jobs can be improved to below $O(n^4)$.  
As discussed in Section~\ref{sec: introduction}, this problem is closely related to
energy minimization in the power-down model, and faster algorithms for this problem
would likely also apply to computing minimum-energy schedules.
Speeding up the algorithms in Sections~\ref{sec: maximizing throughput with budget for gaps},
\ref{sec: min gaps with throughput requirement}, \ref{sec: maximizing the number of gaps},
and~\ref{sec: minimizing max gap} would also be of considerable interest.

There is a number of other variants of gap scheduling, even for unit jobs, that we have not
addressed in our paper. Here are some examples:
\begin{itemize}
	\item 
	The problem of maximizing the minimum gap. This is somewhat similar to the
	problem we studied in Section~\ref{sec: minimizing max gap}, but we are not sure
	whether our method can be extended to this model.
	(We remark here that, according to our definition, the minimum gap size cannot be $0$.
	For the purpose of maximizing the minimum gap, one can also consider an alternative
	model where ``gaps'' of size $0$ are taken into account.)
	\item
	The tradeoff between throughput and gap size. Here, one can consider either the lower or upper bound on the gap size.
	\item 
	The tradeoff between flow time (total or maximum) and gap size. For example,
	one may wish to minimize the total flow time with all gaps not exceeding a	specified threshold.
	\item
	The problems of maximizing the number of gaps or minimizing the maximum gap,
	studied in Sections~\ref{sec: maximizing the number of gaps} and~\ref{sec: minimizing max gap},
	were motivated by applications where the schedule for high-priority jobs needs to contain
	gaps where low-priority jobs can be inserted. A more accurate model for such
	applications would be to require that each block is of length at most $b$, for
	some given parameter $b$. Testing feasibility, with this requirement, can be
	achieved in high-degree polynomial time by extending the techniques 
	from~\cite{Baptiste:min-idle-periods,Baptiste-etal-07,Baptiste_etal_polynomial_12} and
	Sections~\ref{sec: maximizing throughput with budget for gaps} and~\ref{sec: minimizing max gap}, 
	but it would be interesting to see whether more efficient solutions exist.
\end{itemize}

A natural extension of our work would be to study variants of gap scheduling for
jobs of arbitrary length, for models with preemptive or non-preemptive jobs.
The algorithm for minimizing the number of gaps, for example, can be
extended to jobs of arbitrary length~\cite{Baptiste-etal-07,Baptiste_etal_polynomial_12} if preemptions 
are allowed, although its running time increases from $O(n^4)$ to $O(n^5)$. 

Another related direction of research would be to focus on the sizes of blocks in the
schedule, or even consider them together with gap sizes. For example, 
schedules with low density (maximum ratio of the number of jobs in an interval to its length)
would be helpful in controlling the processor's temperature during the 
execution~\cite{Chrobak_etal_temperature-aware_2011}, as they include idle time slots 
that allow the processor to cool down between executing consecutive blocks.


\begin{acknowledgements}
	We would like to thank Nael Abu-Ghazaleh for pointing out the connection between
	gap scheduling and wireless channel access scheduling for high- and low-priority
	traffic streams~\cite{pcf-wikipedia}.
\end{acknowledgements}

%
%



\bibliographystyle{spmpsci}
\bibliography{energy,other}

\end{document}